\documentclass[runningheads]{llncs}
\usepackage{graphicx}
\usepackage{times}
\usepackage{soul}
\usepackage{url}
\usepackage[hidelinks]{hyperref}
\usepackage[utf8]{inputenc}
\usepackage[small]{caption}
\usepackage{graphicx}
\usepackage{amsmath}
\usepackage{booktabs}
\usepackage{algorithm}
\usepackage{algorithmic}
\usepackage[switch]{lineno}

\usepackage[table]{xcolor}
\usepackage{amssymb} 
\usepackage{enumerate}
\usepackage{subcaption}
\usepackage{bbold}
\usepackage{bm}

\begin{document}
\title{On the Fragility of Majority Illusions}
%
%
\author{Maaike Venema-Los\inst{1} \and Zo\'{e} Christoff\inst{1} \and Serte Donderwinkel \inst{1} \and Davide Grossi \inst{1,2}
}
\authorrunning{M. Venema-Los, Z. Christoff, S. Donderwinkel \& D. Grossi  }
%

\institute{
University of Groningen \and
University of Amsterdam\\
\email{\{m.d.los,z.l.christoff,s.a.donderwinkel,d.grossi\}@rug.nl}
}
%
\maketitle              
\begin{abstract}
A majority illusion in a social network occurs when the majority of neighbors of an agent has a certain opinion while the majority of agents in the network has another opinion. 
We study the fragility of majority illusions, that is, whether illusions persist as a result of changes in the underlying network. We consider two settings. First, we study networks where agents have \emph{opinions that change over time} and find that majority illusions disappear under majority updates. 
Second, we study sequences of \emph{large random graphs} of which the size increases, and show that the likelihood of majority illusions goes to zero.
\keywords{Majority illusion  \and Opinion dynamics \and Local convergence.}
\end{abstract}
%


\section{Introduction}
When people form opinions or make decisions, they are often influenced by others around them. At the same time, one's direct network is 
not always a representative sample of the broader community. 
When an agent observes a majority 
opinion among its direct neighbors which 
differs
from the global majority, it is under so-called \emph{majority illusion} \cite{Lerman2016}. This illusion has been studied from 
different perspectives, among which: the structure of the graphs allowing it 
\cite{Venema-Los2025}; its computational complexity \cite{Grandi2025}; and the complexity of eliminating it 
\cite{fioravantes2025,dippel2025}.
In this work, we
consider 
the issue of how 
{\em fragile} majority illusions are with respect to changes either in the opinions distribution or in the 
network structure. 
More specifically, we address the following two questions:
    First, embedding majority illusions in opinion dynamics, we ask: \emph{How persistent are these illusions when opinions are evolving?} 
    Second, considering majority illusions instead as random phenomena: 
    \emph{How likely are these illusions as networks grow very large?} 
To the best of our knowledge, no work so far considers majority illusions from either 
perspective. 
Below, we introduce these questions and their motivation in more detail.

The first question is driven by the fact that opinions are not static, but are typically evolving under social influence. 
Models of social influence typically assume that agents 
know the opinions of their direct network neighbors only 
and that they update their own opinions based on this local input. 
Under these assumptions, can majority illusions 
persist \emph{in the long run}? 
To model opinion update, we will use \emph{majority updates}, where agents update their opinion to that of the majority of their neighbors, in both their synchronous and asynchronous versions \cite{Goles1980,Frischknecht2013,Auletta2015}. 
In such cases, an agent adopts the opinion that is locally prevalent, independently from whether this local information is representative of the global majority or not. This distortion of information between the local and the global opinion proportions can come with interesting effects. For instance, in a sort of `self-fulfilling prophecy', when enough agents are under majority illusion, the global majority opinion might even be reversed, turning what was an illusion into a reality.   
We prove that, under asynchronous majority
updates, 
majority illusions cannot persist, 
and that, under synchronous majority
updates, they can only persist 
if the network oscillates between two states with different global 
majority colors. 


Our second question leaves aside opinion dynamics and considers the likelihood of majority illusions as networks grow 
large. 
We already know that 
these illusions are possible on all networks (at least in their weak form) \cite{Venema-Los2025}, but how likely are they to occur on large networks? 
Many studies focus on social 
networks that are rather small compared to real ones, due to limited data access or computational resources. 
One can, however, study properties of large random networks by considering sequences of networks of increasing sizes, and studying 
convergence behavior of such sequences. With specific types of random networks, 
there  exist results on the limit graphs to which they converge, and especially on what the neighborhood of a random node in the limit graph looks like,  
called \emph{local convergence} 
\cite{Hofstad_2016}. We will use those results to show that on 
Erd\H{o}s--R\'{e}nyi networks, if the network size increases, the likelihood of majority illusions goes to zero. 
In Section \ref{sec:preliminaries}, we introduce the tools and concepts that we will use, 
and 
discuss relevant results from the literature. 
In Section \ref{sec:dynamical_opinions} we study the (non-)persistence of majority illusions under the asynchronous and synchronous majority updates.
Then, in Section \ref{sec:local_convergence}, we establish the %
(non-)persistence of majority illusions on large random networks. We provide some examples of graph sequences on which majority illusions disappear when the network grows large, and highlight the method that can be used to prove similar statements on different sequences of sparse random graphs. 
Some of the proofs 
are provided in  Appendix \ref{app:aux/proofs}.

\section{Preliminaries}\label{sec:preliminaries}\label{sec:related_work}


Below we introduce the concepts that we will use and give a concise overview of relevant literature 
focusing either on opinion dynamics under majority updates or on structural properties of large random graphs. 
We restate some 
directly relevant results. 


\subsection{Graphs and majority illusions}
Let $G = \langle V, E\rangle$ 
be a graph/network (we will use those words interchangeably) where $V$ is a set of nodes and $E$ a set of edges, and let $G^\gamma = \langle V, E, \gamma\rangle$ be a two-colored graph where $\gamma:V\rightarrow\{blue, red\}$ a 
color mapping.
Throughout the paper, when the (colored) graph is clear from the context, we will use $V$ and $E$ for the underlying sets of vertices and edges without further introduction.
Given a graph $G=\langle V, E\rangle$, let $(G_{t}^\gamma)$ be a sequence of colored graphs over time where the $t$-th element in the sequence is $G_t^{\gamma_t}=\langle V, E,\gamma_t\rangle$. 
To avoid the double use of subscripts we will throughout the paper write $G^\gamma_t$ instead of $G^{\gamma_t}_t$.
Such sequences, where colorings vary but the graph remains the same, 
are the object of the first part of this paper. 

In constrast, the second part 
is concerned with sequences of random graphs, in which not only the colorings but also the graphs vary. We use a slightly different notation to mark the difference.  
We write $\boldsymbol{G} = \langle V, E \rangle$ for 
a random graph, and  $\boldsymbol{G}^\gamma = \langle V, E, \gamma \rangle$  for a colored random graph, where $E$ and $\gamma$ are random variables. We denote by  $(\boldsymbol{G}_n)$ a sequence of random graphs, and by 
$(\boldsymbol{G}_n^\gamma)$ a sequence of colored random graphs,  
such that the size $n=|V|$ increases every step in the sequence. We write $\gamma_n$ to refer to the coloring of the 
graph of size $n$.

The majority winner $M^S$ of a set 
$S\subseteq V$
in colored graph 
$G^\gamma 
$
 is the color ($\in \{red, blue\}$) that occurs most often in $S$, or, if both colors occur equally often, $M^S=tie$. We write $M^S_t$ for the majority winner of $S$ at time $t$, 
and define the minority color $\overline{M^S}$ of $S$ 
as follows: $\overline{M^S}=blue$ if $M^S=red$, $\overline{M^S}=red$ if $M^S=blue$, and $\overline{M^S}=tie$ if $M^S=tie$.

Following \cite{Venema-Los2025}, 
we say that 
an agent is under \emph{majority illusion} if its neighborhood majority winner and the global majority winner are two different colors; and under \emph{majority opposition} if its own color and the neighborhood majority winner are two different colors. 
We write $N_i$ for the set of $i$'s neighbors
.


\begin{definition}[Majority illusion \cite{Venema-Los2025,Grandi2025}
]
\label{def:maj-ill-strict}
Given 
a 
colored 
graph $G^\gamma
$, an agent $i\in V$ is under \emph{majority illusion}   (\texttt{m} illusion)
if $M^{N_i}\not = \text{tie}$  and $M^{V}\not = \text{tie}$ and $M^{N_i} \not = M^{V}$.  A graph is in a  \emph{Majority-majority illusion}  (\texttt{Mm} illusion) if more than half of the agents are under majority illusion. 
\end{definition}

\begin{definition}[Majority opposition
\cite{Venema-Los2025}]
\label{def:maj-opp}
Given a 
colored
graph $G^\gamma 
$, agent $i\in V$ is under \emph{majority opposition} if $\gamma(i) \not = M^{N_i}$ and $M^{N_i}\not = tie$. 
A graph in which every node is under majority opposition we call a \emph{majority colored} graph.
\end{definition}


\subsection{
Majority updates}


Under majority update, a node adopts its neighborhood's majority color, and in case of a neighborhood tie, it 
keeps its current color. 
With the \emph{asynchronous} majority update, nodes update one by one:
\begin{definition}[Asynchronous majority update]
    Given a 
    colored
    graph $G_t^{\gamma}
    $, 
    $G^{\gamma}_{t+1}$ is colored as follows:
    if there exists a node which is under majority opposition in $G_t^{\gamma}$, take one such node $i$ uniformly at random\footnote{
    One could specify other (possibly non-random) ways to pick a node. 
    We consider that only nodes under majority opposition can be picked, because  
other nodes do not change color. 
We could include `updating' these nodes also as steps in the model, but since they do not change anything in the network coloring, we decided to not count them as timesteps. 
This also makes it easier to discuss convergence of update sequences, since we don't have to consider endless sequences of `updating' nodes that do not change (see \cite{Frischknecht2013} for a more elaborate discussion on this topic).}. Then, $\gamma_{t+1}(i) = M_{t}^{N_i}$,  and for all $j\not = i$, $\gamma_{t+1}(j)=\gamma_t(j)$.
    If no node in $G_t^{\gamma}$ is under majority opposition, $G_{t+1}^{\gamma}=G_t^{\gamma}$.
\end{definition}
\begin{definition}[Stability and stabilization]
A 
colored graph $G_t^{\gamma}$ is \emph{stable} if $G_{t+1}^{\gamma}=G_t^{\gamma}$
. 
A sequence $(G_{t}^\gamma)$ \emph{stabilizes} 
on a coloring $\gamma$
 if there exists a time $t$ such that $G_{t}^{\gamma}$ is stable 
 and $\gamma_t=\gamma$.
A sequence $(G_{t}^\gamma)$ \emph{reaches a cycle of length $k$} if there exists a time $t$ such that for any $x,y\in \mathbb{N}$ with $x \equiv y \mod k$, $G_{t+x}^{\gamma} = G_{t+y}^{\gamma}$.
\end{definition}
Note that a graph is stable under asynchronous majority updates if and only if no node is under majority opposition.
Sequences of  asynchronous majority updates always 
stabilize, as shown in \cite{Frischknecht2013}: 
\begin{proposition}[Stabilization of asynchronous majority updates \cite{Frischknecht2013}]\label{prop:convergence_asynchronous_maj_upd}
     Let $(G_t^\gamma)$ be a sequence of asynchronous majority updates. $(G^\gamma_t)$ stabilizes
     at a time $t\in \mathcal{O}(|V|^2)$.
\end{proposition}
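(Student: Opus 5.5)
The plan is the standard potential-function argument: find a nonnegative integer quantity, bounded by $\mathcal{O}(|V|^2)$, that strictly decreases with every asynchronous update that changes a color. For this I use the number of \emph{disagreeing edges} $\Phi_t = |\{\{u,v\}\in E : \gamma_t(u)\neq\gamma_t(v)\}|$. Trivially $0\le \Phi_t\le |E|\le |V|^2/2$, and $\Phi_t$ is an integer.

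The key step is to show that $\Phi_{t+1}\le \Phi_t-1$ whenever $G_t^\gamma$ is not stable. Suppose node $i$ is picked. By definition of the update, $i$ is under majority opposition, so $M_t^{N_i}\neq tie$ and $\gamma_t(i)\neq M_t^{N_i}$. Let $a$ be the number of neighbours of $i$ sharing $\gamma_t(i)$ and $b$ the number with the opposite color, so $b>a$. Only edges incident to $i$ change status. Before the update, $i$ contributes $b$ disagreeing edges; after flipping to $M_t^{N_i}$, it contributes $a$. Hence $\Phi_{t+1}-\Phi_t = a-b \le -1$, since $a,b$ are integers with $b>a$.

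Since $\Phi$ strictly decreases at every non-stable step and cannot go below $0$, at most $\Phi_0\le |E|$ non-stable steps can occur. Therefore some $t\le |E|\in\mathcal{O}(|V|^2)$ has $G_t^\gamma$ with no node under majority opposition, and by the note preceding the statement this means $G_t^\gamma$ is stable. From then on, the definition gives $G_{t+1}^\gamma=G_t^\gamma$, so the sequence stabilizes on $\gamma_t$.

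No step is a real obstacle. The only point needing care is checking that ties are excluded: a node with a tied neighbourhood is never picked, so it is never the case that $a=b$, and the decrease is strict. The bound is then simply $|E|$.
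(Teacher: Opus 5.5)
Your proof is correct. Each color change removes $b-a\ge 1$ bichromatic edges, so the process is stable after at most $\Phi_0\le|E|\in\mathcal{O}(|V|^2)$ steps, whichever node under majority opposition is selected. The paper gives no proof of its own and imports the result from \cite{Frischknecht2013}; counting bichromatic edges as a potential is the standard route to that $\mathcal{O}(|V|^2)$ bound, so your argument matches the intended one.
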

The work in \cite{Auletta2015} 
investigates in which networks
it is impossible for a minority to become a stable majority under iterated asynchronous majority update. This is only in graphs that have no edges or that are 
almost cliques\footnote{Fun fact: in those graphs \texttt{Mm} illusions are not possible either.}. In all other graphs, it is possible to color the nodes and choose two update steps such that any further update sequence stabilizes 
with at least half of the nodes 
in the original minority color.



With the \emph{synchronous} majority update, all nodes update at the same time:
\begin{definition}[Synchronous majority update]
    Given a 
    colored graph $G_t^{\gamma}
    $,  $G_{t+1}^{\gamma}$ is colored as follows: for every node $i$, $\gamma_{t+1}(i) = M^{t}_{N_i}$ if $M^t_{N_i}\in\{blue, red\}$, and  $\gamma_{t+1}(i) = \gamma_t(i)$ if $M^t_{N_i}=tie$.
\end{definition}


 
Sequences of synchronous majority updates either stabilize or end up oscillating between two colorings, as shown in \cite{Goles1980}: 

\begin{proposition}[Stabilization / oscillation 
of synchronous majority updates \cite{Goles1980}]\label{prop:convergence_synchronous_maj_upd}
    Let $(G_t^\gamma)$ be a sequence of synchronous majority updates. $(G^\gamma_t)$ stabilizes
    or reaches a cycle of length 2 at a time $t\in \mathcal{O}(|V|^2)$.
\end{proposition}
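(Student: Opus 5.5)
We follow the classical Goles--Olivos energy (Lyapunov) argument for threshold networks, adapted to the tie-keeping rule of the synchronous majority update. Encode colors as $x_t(i)\in\{+1,-1\}$ (blue $=+1$, red $=-1$) and write $A$ for the symmetric adjacency matrix of $G$. The update then reads
\[
x_{t+1}(i)=\operatorname{sgn}\Big(\textstyle\sum_{j\in N_i}x_t(j)\Big),
\]
with $x_{t+1}(i)=x_t(i)$ whenever the sum is $0$. To remove ties, we perturb to a strict threshold rule: $x_{t+1}(i)=\operatorname{sgn}\big(\sum_j a_{ij}x_t(j)+\varepsilon\, x_t(i)\big)$ with $0<\varepsilon<1$. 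Because neighborhood sums are integers, this coincides with the given rule. Equivalently, we use the matrix $W=A+\varepsilon I$, which is still symmetric, and the inner product $\langle Wx_t, x_{t+1}\rangle$ never vanishes coordinatewise.

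\textbf{Energy.} Define
\[
E(t)=-\sum_i x_{t}(i)\sum_j w_{ij}x_{t-1}(j).
\]
We compute $E(t+1)-E(t)$. By the symmetry of $W$, this difference equals $-\sum_i \big(x_{t+1}(i)-x_{t-1}(i)\big)\, h_t(i)$, where $h_t(i)=\sum_j w_{ij}x_t(j)$. Since $x_{t+1}(i)=\operatorname{sgn}(h_t(i))$, each term is $\le 0$. Moreover, whenever $x_{t+1}(i)\neq x_{t-1}(i)$, the term equals $-2|h_t(i)|\le -2\varepsilon$. Hence $E$ is nonincreasing, and it strictly decreases unless $x_{t+1}=x_{t-1}$.

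\textbf{Bounds.} $E$ is bounded below by $-\sum_{i,j}w_{ij}=-(2|E|+\varepsilon|V|)$. It is bounded above by the same quantity in absolute value, so its range has size $\mathcal{O}(|E|+|V|)\subseteq \mathcal{O}(|V|^2)$. To get a quantitative step count, choose $\varepsilon=1/2$. Then every strict decrease is at least $2\cdot\tfrac12=1$. Therefore the number of steps $t$ with $x_{t+1}\neq x_{t-1}$ is at most $2(2|E|+|V|/2)=\mathcal{O}(|V|^2)$.

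\textbf{Period 2.} Once $x_{t+1}=x_{t-1}$ holds at some time $t$, determinism propagates it: $x_{t+2}$ is the same function of $x_{t+1}=x_{t-1}$ as $x_t$ is, so $x_{t+2}=x_t$, and by induction $x_{s+2}=x_s$ for all $s\ge t-1$. The sequence therefore either stabilizes (if $x_t=x_{t-1}$) or reaches a cycle of length 2. The time bound follows because, before the first such $t$, every step strictly decreases $E$ by at least $1$.

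\textbf{Main obstacle.} The step requiring care is the tie rule. Without the $\varepsilon I$ perturbation, the energy need not decrease strictly, and steps where $h_t(i)=0$ would break the quantitative bound. The perturbation handles this precisely because integer neighborhood sums keep $|h_t(i)|\ge\varepsilon$.
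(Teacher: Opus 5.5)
Your proof is correct. It is the classical Goles--Olivos Lyapunov-functional argument, which is what the paper relies on: the paper gives no proof of its own and only cites \cite{Goles1980}. You fold the tie-keeping rule into a symmetric diagonal weight $\varepsilon I$, and you actually obtain the sharper transient bound $\mathcal{O}(|V|+|E|)$. One small slip: for general $\varepsilon\in(0,1)$ you only have $|h_t(i)|\ge\min(\varepsilon,1-\varepsilon)$, not $|h_t(i)|\ge\varepsilon$, but this is harmless because you fix $\varepsilon=\tfrac12$.
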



In the literature specifically on majority illusions, the only work explicitly considering dynamic opinions is \cite{Broersma2025},  
showing that, when 
most 
nodes are under majority illusion, 
a synchronous majority update results in reversing the global majority: 
\begin{proposition}[\texttt{Mm} illusion changes global majority \cite{Broersma2025}] 
\label{prop:maj_ill_changes_global_maj}
    Let $G^\gamma_t$ be a colored graph
    that is in \texttt{Mm} illusion. Then, under synchronous majority updates, $M_t^V = \overline{M_{t+1}^V}$.
\end{proposition}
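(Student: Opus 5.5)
The plan is a direct counting argument that uses only the definitions of \texttt{Mm} illusion and of the synchronous majority update. No earlier convergence result is needed.

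First, fix the global majority at time $t$. Since $G^\gamma_t$ is in \texttt{Mm} illusion, more than half of the agents are under majority illusion, so at least one agent $i$ is. By Definition~\ref{def:maj-ill-strict} this requires $M^V_t \neq tie$. Hence $M^V_t \in \{blue, red\}$, and $\overline{M^V_t}$ is the other color rather than $tie$. Write $c = M^V_t$ and $\bar c = \overline{M^V_t}$.

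Second, look at what each illuded agent does in the synchronous update. Let $I \subseteq V$ be the set of agents under majority illusion at time $t$, so $|I| > |V|/2$. For every $i \in I$, Definition~\ref{def:maj-ill-strict} gives $M^{N_i}_t \neq tie$ and $M^{N_i}_t \neq c$, so $M^{N_i}_t = \bar c$. Since the neighborhood majority is not a tie, the synchronous update rule sets $\gamma_{t+1}(i) = M^{N_i}_t = \bar c$. Thus every agent in $I$ has color $\bar c$ at time $t+1$, whatever its color at time $t$. What the agents outside $I$ do does not matter.

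Third, conclude. At time $t+1$ at least $|I| > |V|/2$ nodes have color $\bar c$, so strictly fewer than $|V|/2$ have color $c$. Therefore $M^V_{t+1} = \bar c = \overline{M^V_t}$, which is equivalent to the stated $M^V_t = \overline{M^V_{t+1}}$ because $\overline{\,\cdot\,}$ is an involution on $\{blue, red\}$.

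There is no real obstacle here. The only point needing care is the first step: the illusion definition excludes a global tie, which makes $\overline{M^V_t}$ a genuine color, and the neighborhood non-tie condition is what forces every illuded agent to actually switch rather than keep its color.
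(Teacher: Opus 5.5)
Your proof is correct and complete. Once a single node is under \texttt{m} illusion, $M^V_t$ is a genuine color. Every illuded node has a strict neighborhood majority of $\overline{M^V_t}$, so it takes that color at $t+1$ whatever its current color. These nodes already form a strict majority of $V$, so the behavior of the remaining nodes is irrelevant.

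The paper does not prove this proposition itself; it imports it from \cite{Broersma2025}, so there is no in-paper argument to compare against. Your direct counting argument is the synchronous counterpart of the reasoning the paper uses in the proof of Lemma~\ref{lemma:Mm_illusion_unstable}. There, stability forces every illuded node to already carry color $\overline{M^V_t}$; in your argument the update rule forces this at time $t+1$.
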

Furthermore, 
a majority-colored graph in \texttt{Mm} illusion will remain in \texttt{Mm} illusion, while its 
global majority opinion oscillates:
\begin{proposition}[\texttt{Mm} illusion oscillates in majority colored graph \cite{Broersma2025}] 
\label{prop:Maj_ill_oscillates_in_maj_colored_graph}
    Let 
    $G^{\gamma}_t$ be a majority-colored graph 
    in \texttt{Mm} illusion.
    Then, under iterated synchronous majority updates, 
    for any $t^*>t$,  $G^{\gamma}_{t^*}$ will be in \texttt{Mm} illusion. Furthermore, $M_{t+k}^V=M_t^V $ for any even integer $k$ and $M_{t+k}^V = \overline{M_{t}^V}$ for any odd  $k$.
\end{proposition}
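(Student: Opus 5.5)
We prove Proposition~\ref{prop:Maj_ill_oscillates_in_maj_colored_graph} by induction on $k$, showing that a single synchronous majority update maps a majority-colored graph in \texttt{Mm} illusion to another majority-colored graph in \texttt{Mm} illusion whose global majority is reversed. The reversal itself is immediate from Proposition~\ref{prop:maj_ill_changes_global_maj}, so $M^V_{t+1}=\overline{M^V_t}$. It remains to show that $G^\gamma_{t+1}$ is again majority colored and in \texttt{Mm} illusion; the parity statements for $M^V_{t+k}$ then follow by iterating.

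\textbf{Step 1: the coloring is complemented.} In a majority-colored graph every node $i$ satisfies $M^{N_i}_t\neq tie$ and $\gamma_t(i)\neq M^{N_i}_t$. Since no neighborhood is tied, the synchronous update sets $\gamma_{t+1}(i)=M^{N_i}_t=\overline{\gamma_t(i)}$ for every node. Hence $\gamma_{t+1}$ is exactly the color-swap of $\gamma_t$.

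\textbf{Step 2: all relevant properties are invariant under swapping.} Swapping the two colors on every node maps the majority of every set $S$ to its complement: $M^S_{t+1}=\overline{M^S_t}$, with ties preserved. In particular:
\begin{itemize}
\item each neighborhood majority flips and stays untied, and each node's own color flips, so every node is still under majority opposition and $G^\gamma_{t+1}$ is majority colored;
\item both $M^{N_i}$ and $M^V$ flip and both stay untied, so node $i$ is under majority illusion at $t+1$ if and only if it was at $t$. The number of nodes under illusion is therefore unchanged, and $G^\gamma_{t+1}$ is in \texttt{Mm} illusion.
\end{itemize}
Step 2 also recovers $M^V_{t+1}=\overline{M^V_t}$ directly, so we do not even need Proposition~\ref{prop:maj_ill_changes_global_maj}.

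\textbf{Step 3: induction.} Applying Steps 1 and 2 repeatedly gives $\gamma_{t+k}=\gamma_t$ for even $k$ and $\gamma_{t+k}=\overline{\gamma_t}$ for odd $k$. This yields both claims: $G^\gamma_{t^*}$ is in \texttt{Mm} illusion for every $t^*>t$, and $M^V_{t+k}$ equals $M^V_t$ for even $k$ and $\overline{M^V_t}$ for odd $k$.

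The only point needing care is Step 1: we must check that no neighborhood tie can leave a node unchanged. This is guaranteed by the definition of majority opposition, which requires $M^{N_i}\neq tie$. Everything else is symmetry bookkeeping.
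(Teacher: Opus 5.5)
Your proof is correct and complete. In a majority-colored graph every node has an untied neighborhood majority opposite to its own color, so the synchronous update is exactly the global swap $\gamma_{t+1}=\overline{\gamma_t}$. That swap preserves ties, flips every majority, and leaves each node's \texttt{m}-illusion status unchanged, so the induction goes through. The paper gives no proof of its own here; it imports the result from \cite{Broersma2025}, so there is no in-paper argument to compare against. Your complementation argument is a natural self-contained proof. It also gives the slightly stronger fact that the set of nodes under \texttt{m} illusion is the same at every step, and, as you note, it makes the appeal to Proposition~\ref{prop:maj_ill_changes_global_maj} unnecessary.
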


\subsection{Large random graphs}
An essential work that discusses local convergence on random graphs is that of Van der Hofstad \cite{Hofstad_2016}, Volume 2, which we use as reference for  definitions on this topic.
The idea of \emph{local convergence on random graphs} is that sequences of random graphs (in our case the size of which increases every step in the sequence) 
converge to a rooted limit graph. This means that if one takes a random point on a graph far in the sequence, the probability is very high that the neighborhood of that point looks like the neighborhood of a specific point, the root, in the limit graph
. 
To define local convergence in probability of random graphs precisely, we need some auxiliary definitions.

Let $d(x,y)$ be the distance between two vertices $x$ and $y$ in a graph, i.e. the number of edges in the shortest path between $x$ and $y$. 
Sometimes, we need to refer to a specific node in a graph:
A rooted graph $(G,v)$ is a graph $G = \langle V, E\rangle$ with a designated vertex $v\in V$.  
We want to measure distances between (colored) rooted graphs
. To define the metric, we need the notions of graph isomorphism, rooted $k$-balls 
and local finiteness
.
Two graphs are isomorphic if they are equal up to the placement of the nodes:
\begin{definition}[Graph isomorphism \cite{Hofstad_2016}]
    Two graphs $G_1=\langle V_1, E_1\rangle$, $G_2=\langle V_2, E_2\rangle$ are \emph{isomorphic} (we write $G_1\simeq G_2$) if there exists a bijection $\phi:V_1\rightarrow V_2$ such that $\langle u_1,v_1\rangle \in E_1$ iff $\langle\phi(u_1),\phi(v_1)\rangle\in E_2$. 
     Two rooted graphs  $(G_1, o_1)$, $(G_2, o_2)$ are isomorphic ($(G_1, o_1)\simeq (G_2, o_2)$) if there exists a bijection $\phi:V_1\rightarrow V_2$ such that $\phi(o_1)=o_2$ and $\langle u_1,v_1\rangle \in E_1$ iff $\langle\phi(u_1),\phi(v_1)\rangle\in E_2$.
    Two colored graphs $G_1^\gamma$ and $G_2^{\gamma}$ are isomorphic if the uncolored graphs  $G_1$ and $G_2$ are isomorphic and there exists a corresponding bijection that preserves the colors of the nodes.
\end{definition}
A $k$-ball is a subgraph of which all nodes are at most a $k$ distance away from the root:
\begin{definition}[Rooted $k$-ball]
    A \emph{$k$-ball} $B^{(G)}_k(o)$ in a graph $G=\langle V, E\rangle$ 
    centered at some $o \in V$, is the set of vertices $v \in V$ such that $d(o, v) \leq k$ and the set of edges between such vertices.
\end{definition}

Even though the graphs that we consider are finite, they may converge to an infinite limit. To ensure that we are still able to define our metric, we 
do not consider all infinite graphs, but only locally finite ones.

\begin{definition}[Locally finite graphs]
A graph $G$ is \emph{locally finite} if for each $o\in V$ and each integer $k$, it holds that $B_k^{(G)}(o)$ is finite.
\end{definition}

This definition is in particular satisfied by any infinite graph with bounded degrees, such as the lattice in any dimension or the infinite $d$-regular tree. 

We call $\mathcal{G}$ the set of all locally finite rooted graphs and $\mathcal{G}^\gamma$ 
the set of all locally finite colored rooted graphs.
Now we can define a metric for 
locally finite 
rooted graphs:
\begin{definition}[Benjamini--Schramm distance \cite{Benjamini_Schramm_2001}]
For two 
rooted locally finite connected graphs $(G_1, o_1)$ and $(G_2, o_2)$, the Benjamini--Schramm distance is
    $$d_{BS}((G_1, o_1),(G_2, o_2)) = \frac{1}{k+1},$$ where $k$ is the maximum integer such that $B^{(G_1)}_k(o_1)\simeq B^{(G_2)}_k(o_2)$ (the rooted $k$-ball of $G_1$ is isomorphic to the rooted $k$-ball of $G_2$). If there is no maximum such $k$, i.e. all $k$-balls are isomorphic, then $d_{BS}((G_1, o_1),(G_2, o_2)) =0$.
\end{definition}

With the Benjamini--Schramm distance, the metric spaces $\mathcal{G}_{BS} =(\mathcal{G},d_{BS})$ on rooted graphs and  $\mathcal{G}^\gamma_{BS} =(\mathcal{G}^\gamma,d_{BS})$ on colored rooted graphs are defined, which naturally gives rise to a notion of convergence for rooted locally finite graphs.

Our graphs $G_n$ will not have a given root, but we will instead consider a \emph{random} root, so that local convergence captures a probabilistic description of the neighborhoods in $G_n$, i.e. which \emph{proportion} of neighborhoods has a certain structure. To formally define the local topology for graphs rooted at a random vertex, we need the notion of convergence in probability of random variables. Intuitively, a sequence of random variables $X_n$ converges in probability to $X$ if the probability of $X_n$ deviating from $X$ by a small but fixed amount becomes smaller and smaller.

\begin{definition}[Convergence in probability of random variables]
    Let  $X_1, X_2, ...$ and $X$ be random variables that take values in a metric space $(E,d)$ 
    and are defined on the same probability space. $(X_n)= X_1, X_2, ...$ \emph{converges in probability} towards  $X$ if for all $\epsilon>0$: $\lim_{n\rightarrow\infty}\mathbb{P}(d(X_n,X)>\epsilon)=0.$ We write $(X_n)\xrightarrow{p} X$.
\end{definition}
Now we can define local convergence in probability for unrooted finite graphs in the Benjamini--Schramm space. The idea is that if $n\rightarrow\infty$, the neighborhood around a random point in the $n^{th}$ graph $G_n$ looks more and more like the neighborhood around the root $o$ in the limit graph $G$.
\begin{definition}[Local convergence in probability of 
random 
graphs in $\mathcal{G}_{BS}$]
    A sequence of 
    random finite unrooted graphs $(\boldsymbol{G}_n)$ \emph{converges locally in probability} to a 
    random locally finite rooted graph $(\boldsymbol{G},o)$ (we write $(\boldsymbol{G}_n)\xrightarrow[locally]{p}(\boldsymbol{G}, o)$),  if, for all 
    continuous functions $f$ from $\mathcal{G_{BS}}$ to $\mathbb{R}$,
    $\mathbb{E}[f(\boldsymbol{G}_n, U_n)|\boldsymbol{G}_n]\xrightarrow{p} f(\boldsymbol{G},o),$
    where $U_n$ is a uniformly random 
    vertex in $\boldsymbol{G}_n$.
\end{definition}

Similarly, we can define local convergence in probability of colored graphs:
\begin{definition}[Local convergence in probability of graphs in $\mathcal{G}^\gamma_{BS}$]
    A sequence of random finite colored unrooted graphs $(\boldsymbol{G}_n^{\gamma})$ 
    converges locally in probability to random locally finite colored rooted graph $(\boldsymbol{G}^{\gamma},o)$ (we write $(\boldsymbol{G}^{\gamma}_n)\xrightarrow[locally]{p}(\boldsymbol{G}^{\gamma}, o)$),  if for all continuous functions $f$ from $\mathcal{G_{BS}}$ to $\mathbb{R}$,
    $\mathbb{E}[f(\boldsymbol{G}^{\gamma}_n, U_n)|\boldsymbol{G}^{\gamma}_n]\xrightarrow{p} f(\boldsymbol{G}^{\gamma},o),$
    where $U_n$ is a uniformly random point in $\boldsymbol{G}^{\gamma}_n$.
\end{definition}

In our case, the relevant continuous functions study the color of a node and the majority among its neighbors. To be precise, we consider   $f=\mathbb{1}_{\{o\text{ is red}\}}$ and $g=\mathbb{1}_{\{o\text{ has a majority of red neighbors}\}}$. These functions are continuous, because we can choose small enough $\delta_f$ and $\delta_g$ (namely $\delta_f = 1$ and $\delta_g = \frac{1}{2}$) such that if the distance between two rooted graphs $(G_1^{\gamma}, o_1)$ and $(G_2^{\gamma}, o_2)$ is at most $\delta_f$, then $f(G_1^{\gamma}, o_1) = f(G_2^{\gamma}, o_2)$ and if the distance between $(G_1^{\gamma}, o_1)$ and $(G_2^{\gamma}, o_2)$ is at most $\delta_g$, then $g(G_1^{\gamma}, o_1) = g(G_2^{\gamma}, o_2)$.
Then, 
   $ \mathbb{E}[f(\boldsymbol{G}_n^{\gamma}, U_n)|\boldsymbol{G}_n^{\gamma}] =\mathbb{P}(U_n\text{ is red}|\boldsymbol{G}_n^{\gamma}) 
    =\frac{\#\{v\in \boldsymbol{G}_n^{\gamma}: v\text{ is red} \}}{ \#\{v\in \boldsymbol{G}_n^{\gamma} \}}$.
Thus, if $(\boldsymbol{G}_n^{\gamma})$ converges locally in probability to $(\boldsymbol{G}^\gamma,o)$ then the proportion of red vertices in $(\boldsymbol{G}_n^{\gamma})$ converges in probability to the probability that $o$ is red in $\boldsymbol{G}^\gamma$. Similarly, by continuity of $g$, if $(\boldsymbol{G}_n^{\gamma})$ converges locally in probability to $(\boldsymbol{G}^\gamma,o)$ then the proportion of vertices in $(\boldsymbol{G}_n^{\gamma})$ with a majority of red neighbors converges in probability to the probability that $o$ has a majority of red neighbors in $\boldsymbol{G}^\gamma$. 


An example of the use of local convergence proofs that is relevant to us 
is the paper about the friendship paradox \cite{Hazra2025}, in which the authors show that in sequences of certain types of random graphs, the friendship bias converges to a limiting value.  

A central result on the convergence of  
random graphs is proven in \cite[Theorem 2.18]{Hofstad_2016}:
\begin{theorem}[Convergence of Erd\H{o}s--R\'{e}nyi graphs \cite{Hofstad_2016}]\label{thm:conv_ER}
Let $\boldsymbol{G}_{n,p}$ be the Erd\H{o}s--R\'{e}nyi graph, which has vertex set $\{1,\dots,n\}$ and each edge is present independently with probability $p$. For $c\in \mathbb{R}$, let $T_{\operatorname{Poisson}(c)}$ be a Bienaymé--Galton--Watson tree with a $\operatorname{Poisson}(c)$ offspring distribution, which is a random tree in which, for each vertex, its number of children are an independent sample with law $\operatorname{Poisson}(c)$. 
Then, for any $c$, 
the sequence $(\boldsymbol{G}_{n, \frac{c}{n}})$ converges locally in probability to $T_{\operatorname{Poisson}(c)}$
: 
    $$(\boldsymbol{G}_{n, \frac{c}{n}})\xrightarrow[locally]{p}(T_{\operatorname{Poisson}(c)},o).$$
\end{theorem}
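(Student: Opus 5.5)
The plan is to prove the theorem by the standard \emph{neighbourhood-count} characterisation of local convergence in probability. The criterion to invoke (as in \cite{Hofstad_2016}) is this: $(\boldsymbol{G}_n)\xrightarrow[locally]{p}(\boldsymbol{G},o)$ holds as soon as, for every $k\in\mathbb{N}$ and every finite rooted graph $(H,o_H)$,
$$\frac{1}{n}\#\{v: B^{(\boldsymbol{G}_n)}_k(v)\simeq (H,o_H)\}\xrightarrow{p}\mathbb{P}\big(B^{(T)}_k(o)\simeq (H,o_H)\big),$$
where $T=T_{\operatorname{Poisson}(c)}$. I would first justify this reduction briefly. The indicator of a $k$-ball isomorphism class is continuous, since it is constant on BS-balls of radius $<1/(k+1)$. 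The limit $T$ is locally finite, so for each $k$ the law of $B_k^{(T)}(o)$ is tight. A bounded continuous $f$ can then be approximated, up to $\varepsilon$ on a set of probability $\ge 1-\varepsilon$, by a finite linear combination of such indicators. Thus it suffices to prove the count statement. Writing $N_{n,k}(H)$ for the count, I would show $\mathbb{E}[N_{n,k}(H)/n]\to p_k(H):=\mathbb{P}(B_k^{(T)}(o)\simeq H)$ and $\mathrm{Var}(N_{n,k}(H)/n)\to 0$, and then conclude by Chebyshev.

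\textbf{First moment.} By exchangeability, $\mathbb{E}[N_{n,k}(H)/n]=\mathbb{P}(B_k(1)\simeq H)$. I would explore the ball around vertex $1$ by breadth-first search. When an active vertex is processed, the number of its new neighbours among the unexplored vertices is $\mathrm{Bin}(n-m,c/n)$, where $m$ is the number of vertices seen so far, and these counts are independent across steps given the past. As long as the explored set has size $m=O(1)$, which we can assume by tightness of $T$'s $k$-ball, each binomial is within total variation $O(1/n)$ of $\operatorname{Poisson}(c)$. The event that an edge closes a cycle, i.e. joins two already-seen vertices, has probability $O(m^2 c/n)=O(1/n)$. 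So if $H$ is a tree, the exploration can be coupled with the first $k$ generations of $T$ with failure probability $o(1)$. If $H$ is not a tree, both probabilities tend to $0$. This gives the first moment.

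\textbf{Second moment.} Similarly, $\mathbb{E}[(N_{n,k}(H)/n)^2]=\mathbb{P}(B_k(1)\simeq H,\,B_k(2)\simeq H)+O(1/n)$. I would explore from $1$ and then from $2$. The probability that the second exploration hits the explored set of the first (size $O(1)$ whp) is $O(1/n)$. Given disjointness, the second exploration runs on $n-O(1)$ fresh vertices, so its coupling with an independent copy of $T$ works exactly as before. Hence the joint probability tends to $p_k(H)^2$, the variance tends to $0$, and Chebyshev finishes the argument.

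The step I expect to be the main obstacle is making the two-vertex coupling rigorous. One has to control the sizes of the explored sets uniformly, which requires truncating on the event that the first $k$ generations have size at most $K$ and letting $K\to\infty$ after $n\to\infty$. One also has to check that conditioning on the first exploration only removes $O(1)$ vertices and edges, so the second exploration still sees essentially independent $\mathrm{Bin}(n-O(1),c/n)$ offspring counts.
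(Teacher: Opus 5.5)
The paper gives no proof of this statement; it quotes it from \cite[Theorem 2.18]{Hofstad_2016}. Your argument is correct and is essentially the standard proof of that cited result: reduction to empirical $k$-neighbourhood counts, first moment via a breadth-first exploration coupled to the $\operatorname{Poisson}(c)$ branching process, second moment via asymptotic disjointness and independence of the neighbourhoods of two vertices, then Chebyshev. You are rightly proving the version in that source, with bounded continuous $f$ and limit $\mathbb{E}[f(T_{\operatorname{Poisson}(c)},o)]$, rather than the paper's literal definition with $f(\boldsymbol{G},o)$ on the right-hand side.
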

This means that if  one takes $n$ large enough, then the proportion of vertices in $\boldsymbol{G}_{n,\frac{c}{n}}$ whose local neighborhood takes a particular shape approximates the probability that the first generations of a Bienaymé--Galton--Watson tree with a $\operatorname{Poisson}(c)$ offspring distribution take that shape. (In particular,  since the limit is a tree, for any $k$, the proportion of vertices in $\boldsymbol{G}_{n,\frac{c}{n}}$ that are in a cycle of length at most $k$ goes to $0$.) To understand heuristically why this limit occurs, note that the number of neighbors of any point is distributed as a $\operatorname{Bin}(n-1,c/n)$ random variable, because it is connected to each of the other vertices with probability $c/n$ independently. This is approximately a $\operatorname{Poisson}(c)$ random variable. Then, to find the neighbors of the neighbors, we may again sample independent edges to the remaining vertices. Since we have only seen very few vertices on the scale $n$, this is still approximately a $\operatorname{Poisson}(c)$ random variable for each neighbor. 
 
In 
Section \ref{sec:local_convergence} of the current paper, we will use this theorem in a similar strategy to the one in \cite{Hazra2025}, to show that on Erd\H{o}s--R\'{e}nyi graphs growing in size, the probability of Majority-majority illusions converges to zero.

\section{Majority illusions 
and opinion dynamics}\label{sec:dynamical_opinions}
In this 
section, we investigate the fragility of majority illusions on graphs 
that stay
constant over time, but on which the opinions of the agents 
(the colors of the nodes
) change over time according to the majority of their neighborhood. We first study majority illusions under asynchronous majority updates, where \emph{only one} node updates every time step (Section~\ref{sec:seq_maj_upd}). We then consider synchronous majority updates where \emph{all} nodes update at the same time every time step (Section~\ref{sec:synch_maj_upd}).
\subsection{Asynchronous majority updates}\label{sec:seq_maj_upd}
As recalled above (Proposition \ref{prop:convergence_asynchronous_maj_upd}), sequences of asynchronous updates always stabilize. 
But can they stabilize into a Majority-majority illusion? 
As we will show, the answer is no: \texttt{Mm} illusions are never stable under asynchronous majority updates, and therefore, they cannot be persistent.
\begin{lemma}\label{lemma:Mm_illusion_unstable}
    Let $G^\gamma_t$ be a colored graph that is in \texttt{Mm} illusion. Under asynchronous majority updates, $G^\gamma_t$ is not stable.
\end{lemma}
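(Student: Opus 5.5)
We argue by contradiction: assume $G^\gamma_t$ is in \texttt{Mm} illusion and is stable under asynchronous majority updates. By the remark after the definition of stability, stability means that no node is under majority opposition. So every node $i$ satisfies either $M_t^{N_i}=tie$ or $\gamma_t(i)=M_t^{N_i}$.

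Since $G^\gamma_t$ is in \texttt{Mm} illusion, $M^V_t\neq tie$; without loss of generality, say $M^V_t=blue$. Let $I$ be the set of agents under majority illusion. By Definition~\ref{def:maj-ill-strict}, each $i\in I$ has $M_t^{N_i}\neq tie$ and $M_t^{N_i}\neq blue$, so $M_t^{N_i}=red$. Because $i$ is not under majority opposition and its neighborhood majority is not a tie, we get $\gamma_t(i)=M_t^{N_i}=red$. Hence every agent in $I$ is red.

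The \texttt{Mm} illusion condition gives $|I|>|V|/2$. Therefore strictly more than half of the nodes are red, which forces $M^V_t=red$. This contradicts $M^V_t=blue$, so $G^\gamma_t$ cannot be stable.

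There is no real obstacle in this argument. The only point needing care is the tie conventions: the definition of majority illusion excludes neighborhood ties, and this is exactly what is needed to conclude that a stable node under illusion has its own color equal to its (non-tie) neighborhood majority.
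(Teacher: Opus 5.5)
Your proposal is correct and follows essentially the same argument as the paper. Both assume stability, so that no node is under majority opposition, and deduce that every node under \texttt{m} illusion must have the global minority color. Since more than $|V|/2$ nodes are under \texttt{m} illusion, this contradicts the fact that fewer than $|V|/2$ nodes have that color.
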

\begin{proof}
    Let 
    $G_t^{\gamma}$ 
    be a colored graph in \texttt{Mm} illusion at time $t$.
    Note that there cannot be a global tie, since $G_t^{\gamma}$ is in \texttt{Mm} illusion, so $M_t^V\in \{red, blue\}$.
    Assume for a contradiction that $G_t^{\gamma}$ is stable under asynchronous majority update, that is, no node $i\in V$ has more neighbors of the other color than $i$'s own color (no node is under majority opposition). 
    Then any node $j$ that is under \texttt{m} illusion must be of $\overline{M_t^V}$, since it has a majority of neighbors of $\overline{M_t^V}$ and is not under majority opposition. However, while there are more than $\frac{|V|}{2}$ nodes under \texttt{m} illusion (since $G_t^{\gamma}$ is in \texttt{Mm} illusion), there are less than $\frac{|V|}{2}$ nodes of color $\overline{M_t^V}$. Contradiction.
\end{proof}
Since asynchronous update sequences always 
stabilize (Proposition \ref{prop:convergence_asynchronous_maj_upd}),  but 
the resulting 
graph cannot be in \texttt{Mm} illusion (Lemma \ref{lemma:Mm_illusion_unstable}), asynchronous updates ensure that majority illusions disappear:
\begin{corollary}
    Let $(G^\gamma_t)$ be a colored graph sequence of asynchronous majority updates. There exists no $t^*$ such that for any $t>t^*$, 
    $G_{t}^{\gamma}$is in \texttt{Mm} illusion.
\end{corollary}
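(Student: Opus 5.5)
The plan is to combine Proposition~\ref{prop:convergence_asynchronous_maj_upd} with Lemma~\ref{lemma:Mm_illusion_unstable}, arguing by contradiction. Suppose there were a $t^*$ such that $G^\gamma_t$ is in \texttt{Mm} illusion for every $t>t^*$. By Proposition~\ref{prop:convergence_asynchronous_maj_upd}, the sequence $(G^\gamma_t)$ stabilizes: there is a time $t_0$ at which $G^\gamma_{t_0}$ is stable, meaning $G^\gamma_{t_0+1}=G^\gamma_{t_0}$.

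The first step is to note that stability propagates forward. If $G^\gamma_{t_0}$ is stable, then no node is under majority opposition in $G^\gamma_{t_0}$. The update rule therefore gives $G^\gamma_{t_0+1}=G^\gamma_{t_0}$, which is again stable, and by induction $G^\gamma_{t}=G^\gamma_{t_0}$ for all $t\ge t_0$. In particular, we can pick $t=\max(t_0,t^*+1)$, so that $t>t^*$ and $G^\gamma_t$ is both stable and, by assumption, in \texttt{Mm} illusion.

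This contradicts Lemma~\ref{lemma:Mm_illusion_unstable}, which says that no colored graph in \texttt{Mm} illusion is stable under asynchronous majority updates. Hence no such $t^*$ exists.

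There is no real obstacle here. The only point needing care is that ``stabilizes'' in the paper's definition asserts stability at a single time $t_0$, so I would state the forward-persistence of stability explicitly rather than leave it implicit. This follows directly from the definition of the asynchronous update in the case where no node is under majority opposition.
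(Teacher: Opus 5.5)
Your proof is correct and follows the paper's argument, which simply combines the stabilization result (Proposition~\ref{prop:convergence_asynchronous_maj_upd}) with Lemma~\ref{lemma:Mm_illusion_unstable}. You also show explicitly that stability persists forward, so that some stable state occurs after $t^*$; the paper leaves this implicit, and your handling of it is right.
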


\subsection{Synchronous majority updates}\label{sec:synch_maj_upd}
We know 
that graph sequences of synchronous majority updates will always stabilize 
or reach a cycle of length $2$ (Proposition \ref{prop:convergence_synchronous_maj_upd}). 
Like in the asynchronous case, we are interested in the state of the network 
at stabilization or oscillation, and in which initial conditions make it stabilize on a state -- or reach a cycle -- in \texttt{Mm} illusion.
Furthermore, since majority illusions distort the perspective of 
agents on the global majority opinion, a natural intuition 
is that 
if many agents are under majority illusion, 
this will shift the global opinion in a network towards the initial minority opinion. 
We will study 
how robust/fragile this effect is.

A first observation when considering synchronous majority updates is that, 
indeed, if a graph $G^{\gamma}_t$ is under \texttt{Mm} illusion at time $t$, then at time $t+1$, 
the original minority opinion has become the majority opinion in the network (Proposition \ref{prop:maj_ill_changes_global_maj})
, but
the new majority winner $M_{t+1}^V$ is not necessarily persistent. 
In general, if a network $G^\gamma_t$ is under \texttt{Mm} illusion, then the sequence $(G^\gamma_t)$ of synchronous updates starting at $G^\gamma_t$ can stabilize  on a state where all nodes are  $M_{t}^V$, a state where all nodes are $\overline{M_{t}^V}$, a state in which there are nodes of both colors, or it can keep oscillating, all dependent on the exact network and its coloring. See Figure \ref{fig:possible_dynamics_results} for some examples of the different cases. 
\begin{figure}[t]
    \centering
    \begin{subfigure}{0.57\linewidth}
        \centering\captionsetup{width=.8\linewidth}
        \includegraphics[width=\linewidth]{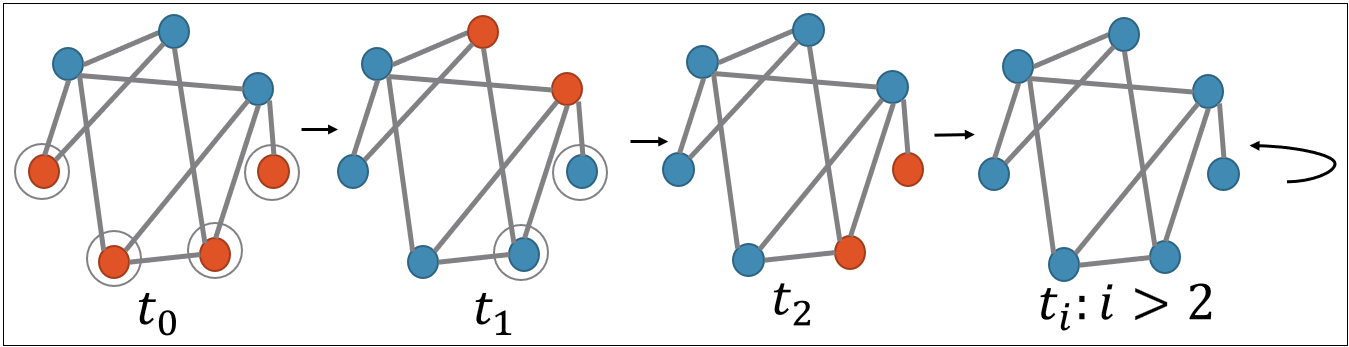}
    \end{subfigure}
    \begin{subfigure}{0.42\linewidth}
        \centering\captionsetup{width=.75\linewidth}
        \includegraphics[width=\linewidth]{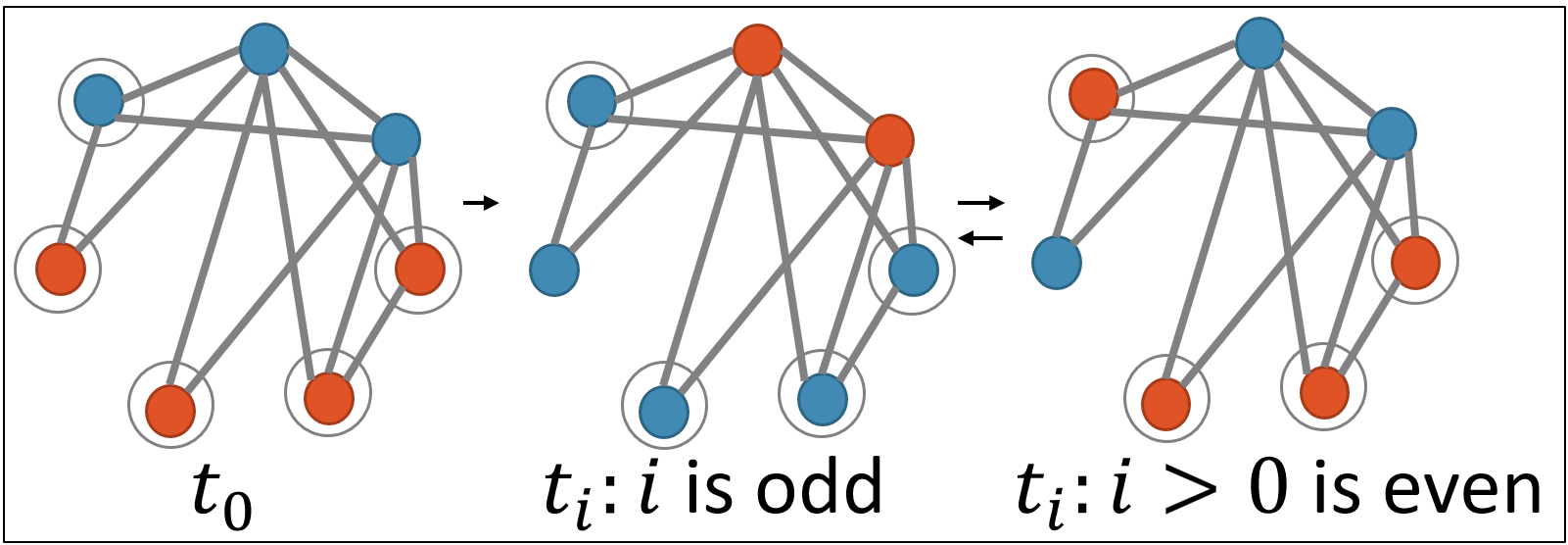}
    \end{subfigure}
    \caption{Some possibilities of the result of synchronous majority dynamics on a network that is at $t_0$ under \texttt{Mm} illusion. Circled nodes are under \texttt{m} illusion. 
    Left: The sequence stabilizes on a state where all nodes are the original minority color. Right: The sequence reaches an oscillation between two colorings that are both majority illusions.
    }
    \label{fig:possible_dynamics_results}
\end{figure}
In general, it is clear that a network cannot 
stabilize on a state that is in \texttt{Mm} illusion, since Proposition \ref{prop:maj_ill_changes_global_maj} 
shows that any network in \texttt{Mm} illusion is not stable. 
\begin{corollary}\label{cor:synch_no_stable_Mm}
    Let $(G^\gamma_t)$ be a colored graph sequence 
    of synchronous majority updates. $(G^\gamma_t)$ can not 
    stabilize 
    into a \texttt{Mm} illusion.
\end{corollary}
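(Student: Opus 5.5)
The corollary will follow from Proposition \ref{prop:maj_ill_changes_global_maj} by contradiction. Suppose that a sequence $(G^\gamma_t)$ of synchronous majority updates stabilizes into a \texttt{Mm} illusion. By the definition of stabilization, there is a time $t$ such that $G^\gamma_t$ is stable, i.e.\ $G^\gamma_{t+1}=G^\gamma_t$, and $G^\gamma_t$ is in \texttt{Mm} illusion. I will derive a contradiction from these two facts together.

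First I note that a \texttt{Mm} illusion requires at least one node under \texttt{m} illusion, since more than half of the agents must be, and $|V|\geq 1$. By Definition \ref{def:maj-ill-strict}, such a node requires $M^V_t\neq tie$. So $M^V_t\in\{red,blue\}$, and hence $\overline{M^V_t}$ is the other, opposite color and is different from $M^V_t$.

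Next I apply Proposition \ref{prop:maj_ill_changes_global_maj} to $G^\gamma_t$, which gives $M^V_{t+1}=\overline{M^V_t}$. Stability gives $\gamma_{t+1}=\gamma_t$ on the same vertex set, so the global majority winner cannot have changed: $M^V_{t+1}=M^V_t$. Combining the two yields $M^V_t=\overline{M^V_t}$. This is impossible for a non-tie color, which is the required contradiction.

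There is no real obstacle here; the only point needing care is the second step, ruling out a global tie. In the tie case $\overline{M^V_t}=tie=M^V_t$ would not be contradictory, so the non-tie condition built into the definition of \texttt{m} illusion is exactly what makes the argument work.
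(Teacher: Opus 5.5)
Your proof is correct and is the paper's argument. The paper derives the corollary directly from Proposition~\ref{prop:maj_ill_changes_global_maj}: a graph in \texttt{Mm} illusion flips its global majority under a synchronous update, so it cannot be stable. Your explicit check that $M^V_t\neq tie$ just spells out a step the paper leaves implicit.
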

Therefore, a \texttt{Mm} illusion cannot be persistent in the sense that more than half of the people keep seeing the \emph{same} wrong majority winner in their neighborhood. Nevertheless, it is possible that a network stays under \texttt{Mm} illusion, where the global majority and the majority that more than half of the people perceive keep oscillating, as in Figure \ref{fig:possible_dynamics_results} b. 
This is the case in majority colored graphs (see Proposition \ref{prop:Maj_ill_oscillates_in_maj_colored_graph}).\footnote{
They are, however, 
not the only graphs in which it can happen. There can be nodes that are not under majority opposition and don't oscillate with the rest of the graph, as in Figure \ref{fig:possible_dynamics_results} b.}
Note that if a network oscillates, it can only be between two states such that it is in \texttt{Mm} illusion in both (as in 
Proposition \ref{prop:Maj_ill_oscillates_in_maj_colored_graph} and Figure \ref{fig:possible_dynamics_results} b), or it is \emph{not} in \texttt{Mm} illusion in both (for example in a properly 2-colored network with $M^V=tie$).  It cannot be the case that in one of the states the network is in \texttt{Mm} illusion and not in the other state.
\begin{proposition}\label{prop:synch_conv_to_2-cycle}
    Let $(G^\gamma_t)$ be a colored graph sequence of synchronous majority updates, such that at time $t^*$, $G^{\gamma}_{t^*}$ has reached a cycle of length $2$. Then either $G^{\gamma}_{t}$ is in \texttt{Mm} illusion for any $t \geq t^*$, or $G^{\gamma}_t$ is \emph{not} in \texttt{Mm} illusion for any $t\geq t^*$.
\end{proposition}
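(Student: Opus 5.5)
The plan is to reduce the statement to the two states of the cycle and then compare them node by node using the update rule in both directions. Let $A := G^\gamma_{t^*}$ and $B := G^\gamma_{t^*+1}$. Since the sequence has reached a cycle of length $2$, $B$ is the synchronous update of $A$, $A$ is the synchronous update of $B$, and for every $t \ge t^*$ the graph $G^\gamma_t$ equals $A$ or $B$. So it suffices to prove that $A$ is in \texttt{Mm} illusion if and only if $B$ is. The roles of $A$ and $B$ are symmetric, so I only prove one direction: assume $A$ is in \texttt{Mm} illusion and show that $B$ is too.

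Under this assumption, Proposition~\ref{prop:maj_ill_changes_global_maj} gives $M^V_B = \overline{M^V_A}$, so neither global majority is a tie. Without loss of generality $M^V_A = red$ and $M^V_B = blue$. I then record how the two updates link colors and neighborhood majorities, writing $R_X$ and $B_X$ for the red and blue nodes of state $X$ and $I_X$ for the nodes under \texttt{m} illusion in $X$.

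\begin{enumerate}[(i)]
\item If $M^{N_i}_B$ is not a tie, then $\gamma_A(i) = M^{N_i}_B$. Hence $i \in I_B$ if and only if $i \in R_A$ and $i$ has no neighborhood tie in $B$. This gives $|I_B| = |R_A| - |R_A \cap T_B|$, where $T_B$ is the set of nodes with a neighborhood tie in $B$.
\item Nodes in $T_B$ keep their color from $B$ to $A$, so $R_A \cap T_B \subseteq R_A \cap R_B$.
\item Every $i \in I_A$ has a blue neighborhood majority in $A$, so $I_A \subseteq B_B$, and therefore $I_A \cap R_B = \emptyset$.
\end{enumerate}

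From (i) the goal becomes the inequality $|R_A \cap T_B| < |R_A| - \frac{|V|}{2}$. The facts available to bound the left side are (ii), (iii), $|I_A| > \frac{|V|}{2}$, $|R_A| > \frac{|V|}{2}$ and $|R_B| < \frac{|V|}{2}$.

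The main obstacle is this counting step. Combining (ii) and (iii) only yields $|R_A \cap T_B| \le |V \setminus I_A| < \frac{|V|}{2}$, which is not strong enough. My first attempt at closing the gap would be to exploit the cycle structure more fully. I would apply the mirror-image statements of (i)--(iii) with $A$ and $B$ exchanged; for instance $|I_A| = |B_B| - |B_B \cap T_A|$, with $B_B \cap T_A \subseteq B_A$. I would also use the structure of a tie node $i \in R_A \cap T_B$: it is red in both states, and its neighborhood in $B$ is exactly balanced. I would look for a charging argument that sends each such node to a distinct red node of $A$ that is not counted in $|V| - |I_A|$, or else derive a contradiction from double counting edges between $R_A \cap R_B$ and the nodes that change color.

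If no such bound is available in general, the natural fallback is to search for a small counterexample in which tie nodes are red in both states. That would indicate the statement needs an extra hypothesis, for instance that no node has a neighborhood tie, as in majority-colored graphs. In that case (i) gives $I_B = R_A$ directly, and the result follows immediately.
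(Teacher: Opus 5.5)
The obstacle you flagged cannot be overcome, because the proposition is false. The configuration from your fallback, a node that is red in both states and sees a tie in $B$, is exactly what breaks it.

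Here is a connected counterexample on seven nodes $c,\ell_1,\ell_2,\ell_3,z,w,w'$ with edges $c\ell_1, c\ell_2, c\ell_3, cz, z\ell_1, \ell_1w, ww'$.

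In state $A$, let $\ell_1,\ell_2,\ell_3,z$ be red and $c,w,w'$ blue, so $M^V_A=red$. Then $\ell_1$ (neighbors $c,z,w$), $\ell_2$, $\ell_3$ and $w'$ see a strict blue majority, so $4>\frac{7}{2}$ agents are under \texttt{m} illusion and $A$ is in \texttt{Mm} illusion. Meanwhile $c$ sees four reds, and $z$ (neighbors $c,\ell_1$) and $w$ (neighbors $\ell_1,w'$) see ties.

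The synchronous update gives state $B$, in which $c$ and $z$ are red and everything else is blue, so $M^V_B=blue$. In $B$ only $\ell_1,\ell_2,\ell_3$ see a strict red majority: $c$ sees three blues and one red, $z$ again sees a tie, and $w,w'$ see only blue. So $3\le\frac{7}{2}$, and $B$ is \emph{not} in \texttt{Mm} illusion.

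Updating $B$ returns $A$: $c$ turns blue, the $\ell_i$ turn red, $z$ keeps red because of its tie, and $w,w'$ stay blue. The sequence is therefore in a $2$-cycle that alternates between a state in \texttt{Mm} illusion and one that is not. In your notation, $R_A\cap T_B=\{z\}$ has size $1$, while $|R_A|-\frac{|V|}{2}=\frac{1}{2}$, so your target inequality fails.

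A disconnected example is even easier to check. Take $K_{1,3}$ with a blue center and red leaves, a blue $K_2$, and an isolated red vertex, whose empty neighborhood counts as a tie.

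The paper's own proof has exactly this hole. From ``$G^\gamma_{t+1}$ is not in \texttt{Mm} illusion'' it concludes that a majority of nodes either see a strict blue majority or are blue and see a tie. That silently drops red nodes with tied neighborhoods, such as $z$, which remain red at $t+2$.

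Your identities (i)--(iii) are correct, and the repair you propose is sound. Suppose no node has a tied neighborhood in either state of the cycle, as in majority-colored graphs. Then (i) gives $I_B=R_A$, and its mirror image gives $I_A=B_B$, so each state is in \texttt{Mm} illusion exactly when the other is. More precisely, (i) shows that when $A$ is in \texttt{Mm} illusion, $B$ is in \texttt{Mm} illusion if and only if $|R_A\cap T_B|<|R_A|-\frac{|V|}{2}$. Any correct version of the proposition therefore needs a hypothesis that controls these red tie nodes.
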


We saw that, if a graph is in \texttt{Mm} illusion, the global majority in the graph will change under synchronous majority update. 
Furthermore, we can show that being in a form of majority illusion, though not necessarily a strict \texttt{Mm} illusion, is a necessary requirement of a graph for the global majority to change after one update.
An agent is under \emph{weak} majority illusion \cite{Venema-Los2025} if the majority in the agent's neighborhood is different from the majority in the entire network, but one of the two can be a tie. 
\begin{definition}[Weak majority illusion] 
\label{def:maj-ill-weak}
Given 
a colored graph $G^\gamma$, agent $i\in V$ is under \emph{weak-majority illusion}  (\texttt{w} illusion) if $M^{N_i} \not = M^V$.
A graph is in a \emph{Majority-weak-majority illusion}  (\texttt{Mw} illusion) if more than half of the agents are under \texttt{w} illusion. 
\end{definition}
\begin{proposition}\label{prop:wmi}
    Let $G^\gamma_t$ be a colored graph. 
    If
    , under synchronous majority updates, 
    $M_{t+1}^V = \overline{M_t^V}\not = tie$, then $G_t^{\gamma}$ is  in \texttt{Mw} illusion with at least one node of color $M_t^V$ under \texttt{m} illusion.
\end{proposition}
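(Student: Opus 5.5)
The plan is a counting argument on the nodes that end up with color $\overline{M_t^V}$ at time $t+1$. Write $M=M_t^V\in\{red,blue\}$; it is not a tie by hypothesis. Since $M_{t+1}^V=\overline{M}$, strictly more than $|V|/2$ nodes have color $\overline{M}$ at time $t+1$.

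First, I classify each node $i$ with $\gamma_{t+1}(i)=\overline{M}$ by the synchronous update rule. Either $M_t^{N_i}=\overline{M}$, or $M_t^{N_i}=tie$ and $\gamma_t(i)=\overline{M}$. In both cases $M_t^{N_i}\neq M$, so $i$ is under \texttt{w} illusion at time $t$. This gives more than $|V|/2$ nodes under \texttt{w} illusion, so $G_t^\gamma$ is in \texttt{Mw} illusion.

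Second, I need a node of color $M$ under \texttt{m} illusion at time $t$. Let $A=\{i:\gamma_{t+1}(i)=\overline{M}\}$, so $|A|>|V|/2$. Let $B=\{i:\gamma_t(i)=\overline{M}\}$; since $M$ is the strict majority at time $t$, $|B|<|V|/2$. Hence $A\setminus B\neq\emptyset$. Take $i\in A\setminus B$. Then $\gamma_t(i)=M$ and $\gamma_{t+1}(i)=\overline{M}$, so $i$ changed color. By the update rule a node with a neighborhood tie keeps its color, so $M_t^{N_i}=\overline{M}$, which is not a tie. Since $M_t^V=M\neq tie$ as well, $i$ is under \texttt{m} illusion in the strict sense of Definition~\ref{def:maj-ill-strict}, and it has color $M$ at time $t$.

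The main point needing care is the treatment of ties. Nodes with tied neighborhoods count toward the \texttt{Mw} bound in the first step, which is why only the weak notion follows. The strict \texttt{m} illusion comes only from a node that actually flipped color, which is what the set difference $A\setminus B$ provides. The cardinality inequality $|A|>|V|/2>|B|$ depends on both global majorities being strict, and this holds because of the assumption $\overline{M_t^V}\neq tie$.
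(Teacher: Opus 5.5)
Your proof is correct and uses the same two observations as the paper. The paper argues by contrapositive: if $G_t^\gamma$ is not in \texttt{Mw} illusion, then at least half the nodes turn $M_t^V$; and if no $M_t^V$-colored node is under \texttt{m} illusion, then no such node flips, so $\overline{M_t^V}$ stays a minority. You state both directly, via the classification of the nodes that are $\overline{M_t^V}$ at time $t+1$ and the count $|A|>|V|/2>|B|$, and you handle ties correctly.
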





From Proposition \ref{prop:convergence_synchronous_maj_upd} we knew that iterated synchronous majority updates either stabilize or oscillate between two situations. Combined with Corollary \ref{cor:synch_no_stable_Mm} and Proposition \ref{prop:synch_conv_to_2-cycle}, this tells us that in the long term, under synchronous updates, \texttt{Mm} illusions can only `persist' in 2-cycles, although the majority color in such cycles alternates.\footnote{These results are in line with what \cite{Behrens2024_GCA} finds in large random regular graphs
.
They study Graph Cellular Automata (GCA) with update rules that are generalizations of the majority rules, which, in one specific version of the rule they study, boils down to the synchronous majority update model we use.}
Hence, both under asynchronous and synchronous majority updates, \texttt{Mm} illusions are fragile, in the following sense: graph sequences never converge to a situation where a majority of the nodes constantly sees the \emph{same} wrong majority.

\section{Majority illusions in large random graphs}\label{sec:local_convergence}
In this 
section, we will derive the probability that Majority-majority illusions 
occur in certain types of large random graphs, by studying the local convergence properties of such graphs. 
Because we know that a sequence converges to a specific limit graph and we can prove that on the limit graph majority illusions will not occur, we can conclude that if one looks far enough in the sequence of random graphs, majority illusions are very unlikely.

In Theorem \ref{thm:conv_ER}, we saw that Erd\H{o}s--R\'{e}nyi graphs converge locally in probability to Bienaymé--Galton--Watson trees with a $\operatorname{Poisson}$ offspring distribution. We will use this fact to show that on large Erd\H{o}s--R\'{e}nyi graphs, majority illusions are very unlikely. 
We will consider graphs under different types of colorings: in Section \ref{sec:independent_random_colorings} ones where nodes are colored randomly, and in Section \ref{sec:coloring_degree}  colorings that are correlated with 
the nodes' degree.

\subsection{Independent random colorings}\label{sec:independent_random_colorings}

\paragraph{Independent identical random coloring of nodes.}
First, we consider the i.i.d. coloring $\gamma_p$ that colors every node independently red with probability $p$ and blue with probability $1-p$.
Corollary 1.17 of \cite{Ramanan2021} shows that if a sequence $(\boldsymbol{G}_n)$ converges locally in probability to $\boldsymbol{G}$, then for any i.i.d. coloring $\gamma$, the $\gamma$-colored graph sequence $(\boldsymbol{G}^\gamma_n)$ converges locally in probability to the $\gamma$-colored limit graph $\boldsymbol{G}^\gamma$.
Therefore, a corollary of Theorem \ref{thm:conv_ER} and \cite[Corollary 1.7]{Ramanan2021} is the following:

\begin{corollary}\label{cor:ER-TPoisson_yp}
         
     Let $(\boldsymbol{G}^{\gamma_p}_{n, \frac{c}{n}})$ with $\boldsymbol{G}^{\gamma_p}_{n, \frac{c}{n}} = \langle V_n, E_n, \gamma_{p,n}\rangle$ be a sequence of colored Erd\H{o}s--R\'{e}nyi graphs and $(T^{\gamma_p'}_{\operatorname{Poisson}(c)},o)$ a  colored rooted tree with a Poisson offspring such that $\gamma_{p, n}:V_n\rightarrow\{red, blue\}$ and $\gamma_{p}':V'\rightarrow\{red, blue\}$ ($V'$ is the set of nodes of $T_{\operatorname{Poisson}(c)}$)  both take value $red$ with probability $p$ and $blue$ with probability $1-p$, independently for every node.
     Then, 
     $$(\boldsymbol{G}^{\gamma_p}_{n, \frac{c}{n}})\xrightarrow[locally]{p}(T^{\gamma_p'}_{\operatorname{Poisson}(c)},o).$$
\end{corollary}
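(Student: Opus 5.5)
This corollary follows by combining Theorem~\ref{thm:conv_ER} with the transfer result \cite[Corollary 1.17]{Ramanan2021}. The only work is to check that the hypotheses of the latter are met and that the two colorings in the statement match.

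\textbf{Step 1: uncolored convergence.} By Theorem~\ref{thm:conv_ER}, for any fixed $c$,
$$(\boldsymbol{G}_{n,\frac{c}{n}})\xrightarrow[locally]{p}(T_{\operatorname{Poisson}(c)},o).$$
The limit is almost surely locally finite. Each vertex has a Poisson, hence almost surely finite, number of children, and a $k$-ball of a tree is a finite union of finite generations. So the limit is a legitimate element of $\mathcal{G}$, as required by the definition of local convergence.

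\textbf{Step 2: check the coloring hypothesis.} The coloring $\gamma_{p,n}$ of $\boldsymbol{G}_{n,\frac{c}{n}}$ assigns marks i.i.d.\ with law $\mathrm{Bernoulli}(p)$ on $\{red,blue\}$. These marks are independent of the edge set. The coloring $\gamma'_p$ on $T_{\operatorname{Poisson}(c)}$ uses the same mark law, again independently of the tree. This is exactly the i.i.d.\ marking to which \cite[Corollary 1.17]{Ramanan2021} applies: if $(\boldsymbol{G}_n)$ converges locally in probability to $(\boldsymbol{G},o)$, then attaching i.i.d.\ marks to the vertices of each $\boldsymbol{G}_n$ and of $\boldsymbol{G}$ preserves local convergence in probability, now in the marked space $\mathcal{G}^\gamma_{BS}$.

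\textbf{Step 3: conclude.} Applying that corollary gives the stated convergence of $(\boldsymbol{G}^{\gamma_p}_{n,\frac{c}{n}})$ to $(T^{\gamma'_p}_{\operatorname{Poisson}(c)},o)$.

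The main obstacle is bookkeeping rather than mathematics. The two notions of ``local convergence in probability'' must coincide. The paper defines it via continuous $f$ on $\mathcal{G}^\gamma_{BS}$ and the conditional expectation given the colored graph. \cite{Ramanan2021} may phrase it via neighborhood counts, i.e.\ indicators of isomorphism classes of colored $k$-balls. I would reduce both to the condition
$$\frac{1}{n}\#\{v: B_k(v)\simeq H\}\xrightarrow{p}\mathbb{P}(B_k(o)\simeq H)$$
for every finite colored rooted graph $H$ and every $k$, which is the standard equivalent formulation in \cite{Hofstad_2016}.

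If one wanted to avoid citing \cite{Ramanan2021}, a direct route is available. Condition on $\boldsymbol{G}_n$. For a fixed uncolored rooted $k$-ball shape $H_0$, let $A$ be the set of vertices whose $k$-ball is isomorphic to $H_0$. For any coloring $H$ of $H_0$, the number of vertices in $A$ whose colored ball matches $H$ has conditional mean $|A|\cdot q_H$, where $q_H$ is the probability that i.i.d.\ colors realize $H$ (accounting for automorphisms).

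The conditional variance of this count is $o(n^2)$. Colors at $v$ and $w$ interact only if their $k$-balls overlap, i.e.\ $d(v,w)\le 2k$. The number of such pairs is $o(n^2)$ with high probability, because the fraction of vertices in balls of size above $M$ vanishes as $M\to\infty$; this uses the same uniform integrability that underlies Theorem~\ref{thm:conv_ER}. Chebyshev's inequality then finishes the argument.
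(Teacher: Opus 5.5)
Your proposal is correct and takes the same route as the paper: the paper also obtains this corollary directly by combining Theorem~\ref{thm:conv_ER} with the i.i.d.-marking transfer result \cite[Corollary 1.17]{Ramanan2021}. Your optional direct second-moment argument is also sound, with one correction: the overlap-pair bound $\sum_v |B_{2k}(v)| = o_P(n^2)$ needs only tightness of the ball sizes, not uniform integrability, and tightness follows from local convergence to a locally finite limit together with the trivial bound $|B_{2k}(v)|\le n$.
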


We can use this convergence to study the probability of majority illusions on large Erd\H{o}s--R\'enyi graphs colored according to $\gamma_p$.
We will show that on 
such graphs, the probability of the graph being in \texttt{Mm} illusion goes to 0:
\begin{theorem}\label{thm:conv_ER_coloring_p}
    Let $(\boldsymbol{G}^{\gamma_p}_{n, \frac{c}{n}})$ be a sequence of Erd\H{o}s--R\'{e}nyi graphs where nodes are colored red independently with probability $p\not = 0.5$, and blue otherwise. If $n\rightarrow \infty$, the probability of $\boldsymbol{G}^{\gamma_p}_{n, \frac{c}{n}}$ being in \texttt{Mm} illusion goes to 0.
\end{theorem}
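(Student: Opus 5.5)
We assume without loss of generality that $p>\frac{1}{2}$; the case $p<\frac{1}{2}$ is symmetric after swapping the colors. The plan has two parts. First we show that with high probability the global majority is red. Then we show that the proportion of vertices whose neighborhood majority is blue converges in probability to a constant strictly below $\frac{1}{2}$. For the first part, we apply Corollary \ref{cor:ER-TPoisson_yp} with the continuous function $f=\mathbb{1}_{\{o\text{ is red}\}}$. As explained in the preliminaries, the fraction of red vertices in $\boldsymbol{G}^{\gamma_p}_{n,\frac{c}{n}}$ then converges in probability to $\mathbb{P}(o\text{ is red})=p>\frac{1}{2}$. (The weak law of large numbers also gives this directly.) Hence $\mathbb{P}(M^{V_n}=red)\to 1$.

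For the second part, we use the analogous continuous function $g'=\mathbb{1}_{\{o\text{ has a strict majority of blue neighbors}\}}$, which is continuous by the same argument as for $g$ (it depends only on the $1$-ball). Corollary \ref{cor:ER-TPoisson_yp} then shows that the fraction $Q_n$ of vertices with a blue neighborhood majority converges in probability to $q:=\mathbb{P}(B>R)$. Here $R$ and $B$ are the numbers of red and blue children of the root $o$ in $T^{\gamma_p'}_{\operatorname{Poisson}(c)}$. We will compute $q$ by conditioning on the number of children $K\sim\operatorname{Poisson}(c)$. Given $K=k$, we have $R\sim\operatorname{Bin}(k,p)$ and $B=k-R$. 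For $p>\frac{1}{2}$, the pointwise comparison $\binom{k}{j}p^j(1-p)^{k-j}>\binom{k}{j}p^{k-j}(1-p)^{j}$ for $j>k/2$ gives $\mathbb{P}(R>B\mid K=k)>\mathbb{P}(B>R\mid K=k)$ whenever $k\geq 1$. The two conditional probabilities are equal (both zero) when $k=0$. Since $\mathbb{P}(K\geq 1)=1-e^{-c}>0$ for $c>0$, averaging over $K$ gives $\mathbb{P}(R>B)>\mathbb{P}(B>R)$. Together with $\mathbb{P}(R>B)+\mathbb{P}(B>R)\leq 1$, this yields $q<\frac{1}{2}$. (If $c=0$, the graph has no edges and no vertex can be under \texttt{m} illusion, so the statement is trivial.)

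To conclude, note that the event that $\boldsymbol{G}^{\gamma_p}_{n,\frac{c}{n}}$ is in \texttt{Mm} illusion is contained in the union of two events. The first is $\{M^{V_n}\neq red\}$. The second is the event that $M^{V_n}=red$ and more than half of the vertices have a strictly blue neighborhood majority, which is contained in $\{Q_n>\frac{1}{2}\}$. The first event has probability tending to $0$ by the first part. For the second, choose $\epsilon=\frac{1}{2}-q>0$; then $\mathbb{P}(Q_n>\frac{1}{2})\leq\mathbb{P}(|Q_n-q|>\epsilon)\to 0$ by convergence in probability. A union bound finishes the proof.

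The step I expect to need the most care is the strict inequality $q<\frac{1}{2}$. It must be strict, uniformly in the limit, and it has to handle vertices with ties, including isolated vertices. The conditioning-on-$K$ comparison above is the argument I would try first. An alternative is Poisson thinning, under which $R\sim\operatorname{Poisson}(cp)$ and $B\sim\operatorname{Poisson}(c(1-p))$ are independent, but the binomial comparison is cleaner. A secondary technical point is to make sure that the indicator of a blue (rather than red) neighborhood majority is indeed a continuous functional covered by the local convergence statement; it is, since it depends only on the colored $1$-ball of the root.
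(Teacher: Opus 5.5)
Your proposal is correct and follows the paper's route. The paper also uses local convergence to the i.i.d.-colored $\operatorname{Poisson}(c)$ tree, uses the red fraction converging to $p$ to fix the global majority, and shows that the limiting fraction of vertices seeing the minority color as a strict neighborhood majority is strictly below $\frac{1}{2}$. Your only departure is in proving that last inequality: you use the reflection comparison $\mathbb{P}(R=j\mid K=k)>\mathbb{P}(R=k-j\mid K=k)$ for $j>k/2$ together with $\mathbb{P}(R>B)+\mathbb{P}(B>R)\leq 1$. The paper's lemma instead bounds each binomial tail by comparison with the symmetric case $p=\frac{1}{2}$ and uses $\mathbb{P}(\operatorname{Poisson}(c)=0)>0$. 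Your version is a minor and somewhat more carefully justified variant, and your explicit union bound simply makes the paper's final step precise.
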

\begin{proof}
We will use that for any $k,\ell$, the function $f_{k,\ell}=\mathbb{1}_{\{o\text{ has $k$ neighbors of which $\ell$ are red}\}}$ is continuous in the local topology. By Corollary \ref{cor:ER-TPoisson_yp}, the proportion of nodes in $\boldsymbol{G}^{\gamma_p}_{n, \frac{c}{n}}$ that have $\ell$ neighbors of which $k$ are red then converges in distribution to the probability that, in $T^{\gamma_p'}_{\operatorname{Poisson}(c)}$, $o$ has $\ell$ children of which $k$ red.

For $P$ the number of neighbors and $N_{red}$ the number of red neighbors of node $o$ in $T^{\gamma'_p}_{\operatorname{Poisson}(c)}$, their joint law is as follows. By definition, $P\sim \operatorname{Poisson}(c)$, and conditional on $P$, $N_{red}\sim \operatorname{Bin}(P,p)$
, since $o$ has $P$ neighbors and each of them is red with probability $p$ independently.

To determine the probability that a node is under \texttt{m} illusion in $T^{\gamma'_p}_{\operatorname{Poisson}(c)}$, we need to know the probability that more than half of the neighbors of a node are red. The probability that exactly $j$ neighbors of a node $v$ with given degree $d_v$ are red is given by $\mathbb{P}(v\text{ has }j\text{ red neighbors})=\mathbb{P}(\operatorname{Bin}(d_v, p)=j)$.
Therefore, the probability that more than half of the neighbors of node $v$ with given degree $d_v$ are red is \\
$\mathbb{P}(v\text{ has more than }\left\lceil\frac{d_v+1}{2}\right\rceil\text{ red neighbors}) 
=\sum_{j=\lceil\frac{d_v+1}{2}\rceil}^{d_v} \mathbb{P}(\operatorname{Bin}(d_v, p)=j).
$
Then, the probability that a node of which we do not know the degree yet has more than half of its neighbors red in $T^{\gamma'_p}_{\operatorname{Poisson}(c)}$ is\footnote{
This expression is equal to $\sum_{d=1}^{\infty}\left(e^{-c}c^d\lceil\frac{d}{2}\rceil\cdot\sum_{j=\lceil\frac{d+1}{2}\rceil}^d \frac{p^j(1-p)^{d-j}}{j!(d-j)!}\right)$, but we will not use this mathematical notation
.} 
\begin{equation*}
    \begin{aligned}
        \mathbb{P}(
        M^{N_v}=red)
        =&\sum_{d=1}^{\infty}\left(\mathbb{P}(d_v=d)\cdot\mathbb{P}(v\text{ has more than }\left\lceil\frac{d+1}{2}\right\rceil\text{ red neighbors})\right) \\
        =&\sum_{d=1}^{\infty}\left(\mathbb{P}(\operatorname{Poisson}(c)=d)\cdot\sum_{j=\lceil\frac{d+1}{2}\rceil}^d \mathbb{P}(\operatorname{Bin}(d, p)=j)\right) \\
    \end{aligned}
\end{equation*}
 Now we have, since $(\boldsymbol{G}^{\gamma_p}_{n, \frac{c}{n}})\xrightarrow[locally]{p}(T^{\gamma_p'}_{\operatorname{Poisson}(c)},o),$ that
\begin{equation}\label{eq:v_sees_red_majority}
\begin{aligned}
\frac{1}{n}&\sum_{v\in \boldsymbol{G}^{\gamma_p}_{n, \frac{c}{n}}}\mathbb{1}_{\{v \text{ sees red majority}\}}\xrightarrow{p}\mathbb{P}(o \text{ sees red majority in }T^{\gamma'_p}_{\operatorname{Poisson}(c)}) \\
&=\sum_{d=1}^{\infty}\left(\mathbb{P}(\operatorname{Poisson}(c)=d)\cdot\sum_{j=\lceil\frac{d+1}{2}\rceil}^d \mathbb{P}(\operatorname{Bin}(d, p)=j)\right) 
\end{aligned}
\end{equation}
and that
\begin{equation}\label{eq:o_is_red}
        \frac{1}{n}\sum_{v\in \boldsymbol{G}^{\gamma_p}_{n, \frac{c}{n}}}\mathbb{1}_{\{v \text{ is red}\}}\xrightarrow{p}\mathbb{P}(o \text{ is red}) = p. \\
\end{equation}
Equation \ref{eq:o_is_red} implies that if all nodes in a graph are colored randomly blue or red such that with a probability $p$ they are red, the probability that a strict majority of nodes is red in the graph will converge to 1 if $p>0.5$ and to 0 if $p< 0.5$. Vice versa, the probability that a strict majority of nodes is blue in the graph will converge to 1 if $p<0.5$ and to 0 if $p> 0.5$.
Therefore, given any $p<0.5$, the probability that the graph is under \texttt{Mm} illusion is the probability that a majority of nodes sees a red majority, which converges to 1 if the sum in Equation \ref{eq:v_sees_red_majority} 
 is bigger than $\frac{1}{2}$, and to 0 if it is smaller than $\frac{1}{2}$. We show that it is always smaller than $\frac{1}{2}$:



\begin{lemma}\label{lem:v_sees_red_majority<0.5}
    $
    \sum_{d=1}^{\infty}\left(\mathbb{P}(\operatorname{Poisson}(c)=d)\cdot\sum_{j=\lceil\frac{d+1}{2}\rceil}^d \mathbb{P}(\operatorname{Bin}(d, p)=j)\right)<\frac{1}{2}
    $ for any $c\in \mathbb{N}$ and $p<\frac{1}{2}$.
\end{lemma}


We conclude that
the probability of a \texttt{Mm} illusion for $p<0.5$ converges to 0. For $p>0.5$, the situation is analogous, only the colors red and blue are swapped.
\end{proof}
Note that if $p=0.5$, then  $\mathbb{P}(M^V = red
)\rightarrow \frac{1}{2}$ (according to the central limit theorem), so we cannot conclude anything about the probability of a \texttt{Mm} illusion.


\paragraph{Independent non-identical random coloring of nodes.}
We can generalize the coloring in the previous paragraph slightly by releasing the condition that the probability of being red/blue is \emph{identical} for every node. Let $\gamma_{X}$ be a graph coloring that colors every node $i$ independently with a probability $p_i$ that is drawn randomly from a distribution $X$ of which the mean is not equal to 0.5. 
Again, since this coloring is determined independently for every node, Corollary 1.17 of \cite{Ramanan2021} implies that the convergence from Theorem \ref{thm:conv_ER} also holds for this coloring:

\begin{corollary}\label{cor:ER-TPoisson_yX}
     Let $(\boldsymbol{G}^{\gamma_X}_{n, \frac{c}{n}})$ with $\boldsymbol{G}^{\gamma_X}_{n, \frac{c}{n}} = \langle V_n, E_n, \gamma_{X,n}\rangle$ be a sequence of colored Erd\H{o}s--R\'{e}nyi graphs and $(T^{\gamma_X'}_{\operatorname{Poisson}(c)},o)$ a  colored rooted tree with a Poisson offspring such that $\gamma_{X, n}:V_n\rightarrow\{red, blue\}$ and $\gamma_{X}':V'\rightarrow\{red, blue\}$ ($V'$ is the set of nodes of $T_{\operatorname{Poisson}(c)}$)  both color node $i$ $red$ with probability $p_i$ and $blue$ with probability $1-p_i$, where $p_i$ is drawn independently from the distribution $X$. 
     Then, 
     $$(\boldsymbol{G}^{\gamma_X}_{n, \frac{c}{n}})\xrightarrow[locally]{p}(T^{\gamma_X'}_{\operatorname{Poisson}(c)},o).$$
\end{corollary}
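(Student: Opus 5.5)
The plan is to reduce this statement to the i.i.d. case already covered by Corollary \ref{cor:ER-TPoisson_yp}, i.e.\ to Theorem \ref{thm:conv_ER} combined with \cite[Corollary 1.17]{Ramanan2021}. The key observation is that although the coloring is described in two stages (first draw $p_i\sim X$, then color $i$ red with probability $p_i$), the pairs $(p_i,\gamma_{X,n}(i))$ are i.i.d.\ across nodes and independent of the edge set $E_n$. Hence the colors themselves are i.i.d.\ across nodes and independent of the graph.

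First I will compute the marginal law of a single color by conditioning on $p_i$:
\[
\mathbb{P}(\gamma_{X,n}(i)=red)=\mathbb{E}\bigl[\mathbb{P}(\gamma_{X,n}(i)=red\mid p_i)\bigr]=\mathbb{E}[p_i]=\mathbb{E}[X]=:\bar p .
\]
Independence of the pairs $(p_i,\gamma_{X,n}(i))$ across $i$ then gives that $(\gamma_{X,n}(i))_{i\in V_n}$ is a family of independent Bernoulli$(\bar p)$ colors. So, in law, $\gamma_{X,n}$ coincides with the coloring $\gamma_{\bar p,n}$ of Corollary \ref{cor:ER-TPoisson_yp}. The same computation on $V'$ shows that $\gamma_X'$ coincides in law with $\gamma'_{\bar p}$ on $T_{\operatorname{Poisson}(c)}$.

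Second, I will note that local convergence in probability depends only on the joint law of the colored graph. This is because the quantities $\mathbb{E}[f(\boldsymbol{G}^{\gamma}_n,U_n)\mid \boldsymbol{G}^{\gamma}_n]$ and the limit $f(\boldsymbol{G}^\gamma,o)$ are functionals of that law. Therefore the conclusion of Corollary \ref{cor:ER-TPoisson_yp} with $p=\bar p$ transfers verbatim, giving
\[
(\boldsymbol{G}^{\gamma_X}_{n,\frac{c}{n}})\xrightarrow[locally]{p}(T^{\gamma_X'}_{\operatorname{Poisson}(c)},o).
\]

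As an alternative route, in case one wants to keep the $p_i$ visible (e.g.\ for later use), I would apply \cite[Corollary 1.17]{Ramanan2021} directly with i.i.d.\ marks $(p_i,\gamma(i))$ taking values in $[0,1]\times\{red,blue\}$. I would then push the marked convergence forward along the projection onto the color coordinate. This projection is continuous in the (marked) Benjamini--Schramm topology, since it only forgets part of each mark.

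The only step needing care is the independence claim. I must make sure that the $p_i$ are drawn independently across nodes and independently of the edges. Then the unconditional colors are genuinely i.i.d.\ and not merely exchangeable. This is exactly the assumption in the statement, so I expect no real obstacle. The point to stress is that the randomness of $p_i$ mixes out entirely at the level of each single node.
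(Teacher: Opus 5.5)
Your proof is correct and is essentially the paper's argument. The paper just notes that $\gamma_X$ is determined independently for every node and invokes \cite[Corollary 1.17]{Ramanan2021} on top of Theorem \ref{thm:conv_ER}, which is your alternative route and also what underlies Corollary \ref{cor:ER-TPoisson_yp}, where your main route lands. Your identification of $\gamma_X$ in law with i.i.d.\ Bernoulli$(\mathbb{E}[X])$ colors independent of the graph is something the paper only observes after the corollary (to derive Corollary \ref{cor:conv_ER_coloring_id}); using it up front just makes the reduction to the i.i.d.\ case explicit.
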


Now we have, since $(\boldsymbol{G}^{\gamma_X}_{n, \frac{c}{n}})\xrightarrow[locally]{p}(T^{\gamma_X'}_{\operatorname{Poisson}(c)},o),$ that\\
$\frac{1}{n}\sum_{v\in \boldsymbol{G}^{\gamma_{X}}_{n, \frac{c}{n}}}\mathbb{1}_{\{v \text{ sees red majority}\}}
\xrightarrow{p}\mathbb{P}(o \text{ sees red majority in }T^{\gamma'_X}_{\operatorname{Poisson}(c)})$ 
and\\
      $  \frac{1}{n}\sum_{v\in \boldsymbol{G}^{\gamma_{X}}_{n, \frac{c}{n}}}\mathbb{1}_{\{v \text{ is red}\}}\xrightarrow{p}\mathbb{P}(o \text{ is red in } T^{\gamma'_X}_{\operatorname{Poisson}(c)}). $\\ 
However, $\mathbb{P}(o \text{ sees red majority in }T^{\gamma'_X}_{\operatorname{Poisson}(c)}) = \mathbb{P}(o \text{ sees red majority in }T^{\gamma'_{p}}_{\operatorname{Poisson}(c)}) $ and $\mathbb{P}(o \text{ is red in} T^{\gamma'_X}_{\operatorname{Poisson}(c)}) = \mathbb{P}(o \text{ is red in} T^{\gamma'_p}_{Poisson(c)})$ for $p=\mathbb{E}[X]$. Therefore, in convergence there is no difference between an independent identical distribution of colors and an independent non-identical distribution of colors, and Theorem \ref{thm:conv_ER_coloring_p} holds for the coloring $\gamma_X$ as well:
\begin{corollary}\label{cor:conv_ER_coloring_id}
    Let $(\boldsymbol{G}^{\gamma_X}_{n, \frac{c}{n}})$ be a sequence of Erd\H{o}s--R\'{e}nyi graphs where nodes are colored red with probability $p_i$ and blue with probability $1-p_i$, where $p_i$ is independently drawn for every node $i$ from distribution $X$ with $\mathbb{E}[X]\not = 0.5$. If $n\rightarrow \infty$, the probability of $\boldsymbol{G}^{\gamma_X}_{n, \frac{c}{n}}$ being in \texttt{Mm} illusion goes to 0.
\end{corollary}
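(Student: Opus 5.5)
The plan is to reduce Corollary~\ref{cor:conv_ER_coloring_id} to Theorem~\ref{thm:conv_ER_coloring_p} via Corollary~\ref{cor:ER-TPoisson_yX}. The key observation is that in the limit tree $T^{\gamma_X'}_{\operatorname{Poisson}(c)}$, the colors of distinct vertices are still independent. Each vertex $i$ is red with probability $p_i$, and $p_i$ is itself drawn independently from $X$. So, marginalising over $p_i$, every vertex is red independently with probability $\mathbb{E}[X]=:p$.

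Concretely, I would first check that the annealed law of the colored tree $T^{\gamma_X'}_{\operatorname{Poisson}(c)}$ equals that of $T^{\gamma_p'}_{\operatorname{Poisson}(c)}$. For any finite set of vertices, the joint color probability factorises as $\prod_i \mathbb{E}[p_i^{x_i}(1-p_i)^{1-x_i}] = \prod_i p^{x_i}(1-p)^{1-x_i}$, and this coloring is independent of the tree structure. The same argument, applied to the finite graphs, shows that $\boldsymbol{G}^{\gamma_X}_{n,c/n}$ and $\boldsymbol{G}^{\gamma_p}_{n,c/n}$ have the same law.

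Second, I would apply Corollary~\ref{cor:ER-TPoisson_yX} with the continuous functions $f=\mathbb{1}_{\{o\text{ is red}\}}$ and $g=\mathbb{1}_{\{o \text{ sees red majority}\}}$ (and the analogous function for a blue majority). This gives convergence in probability of the empirical fraction of red vertices to $p$, and of the fraction of vertices seeing a red majority to the Poisson--binomial sum of Equation~\ref{eq:v_sees_red_majority} with this $p$.

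Third, I would rerun the ending of the proof of Theorem~\ref{thm:conv_ER_coloring_p}. Take $p<\frac12$ without loss of generality, by symmetry of colors. Then with probability tending to $1$ the global majority is blue and not a tie. Being in \texttt{Mm} illusion therefore requires more than $n/2$ vertices to see a red majority. By Lemma~\ref{lem:v_sees_red_majority<0.5}, the limiting fraction $q$ of such vertices satisfies $q<\frac12$. Choosing $\epsilon=(\frac12-q)/2$, convergence in probability gives
\[
\mathbb{P}(\text{\texttt{Mm} illusion}) \le \mathbb{P}(M^V\neq blue)+\mathbb{P}\Bigl(\tfrac1n\#\{v: M^{N_v}=red\}>\tfrac12\Bigr)\to 0 .
\]

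The main obstacle is the first step: making rigorous that only the mean of $X$ matters. This identification should be done at the level of the annealed law, which is what the Ramanan-type result in Corollary~\ref{cor:ER-TPoisson_yX} delivers. I would not try to argue for each fixed realisation of the $p_i$. If needed, I would bypass the limit statement altogether and use directly the equality in law with the $\gamma_p$-colored sequence, so that Theorem~\ref{thm:conv_ER_coloring_p} applies verbatim. A minor issue is that Lemma~\ref{lem:v_sees_red_majority<0.5} is stated only for $c\in\mathbb{N}$; I would note this and assume the same restriction on $c$, or remark that the lemma's proof does not use integrality.
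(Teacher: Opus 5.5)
Your proposal is correct and follows the paper's own route. You identify the $\gamma_X$-colored Poisson tree with the $\gamma_p$-colored one for $p=\mathbb{E}[X]$ and rerun the end of the proof of Theorem~\ref{thm:conv_ER_coloring_p}, and your factorization $\prod_i \mathbb{E}[p_i^{x_i}(1-p_i)^{1-x_i}]=\prod_i p^{x_i}(1-p)^{1-x_i}$ supplies the justification that the paper only asserts. Your fallback is actually the cleanest version: $\boldsymbol{G}^{\gamma_X}_{n,\frac{c}{n}}$ already has the same law as $\boldsymbol{G}^{\gamma_p}_{n,\frac{c}{n}}$ for every $n$, so Theorem~\ref{thm:conv_ER_coloring_p} applies verbatim and Corollary~\ref{cor:ER-TPoisson_yX} is not even needed.
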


\subsection{Coloring based on degree
}\label{sec:coloring_degree}
We can study colorings that are slightly more specific than the simple $p$-based coloring above. As long as the coloring is based only on local properties of nodes, the same proof technique can be used.
We consider now a coloring that is still independently distributed but that is based on the network structure: the probability of being red or blue depends on the degree of the node.
Examples of properties of agents in social networks that correlate with their number of neighbors are the happiness or well-being of an agent \cite{Huang2021,Kang2023}, or the likeliness of an agent to vote \cite{Kernell2023}. 
We construct colorings based on degree: let $\gamma_{d,\pi}$ for $\pi=(\pi_d)_{d\ge 0}$ be a sequence of distributions indexed by possible values of the degree $d$.
Since this coloring is determined independently for every node, 
using only information on the $1$-neighbourhood, local convergence in probability of a growing sequence of graphs is enough to also obtain local convergence of the coloured graphs.  This is the content of the next proposition. 

\begin{proposition}\label{prop:degree_colored_local}
    Let $(\boldsymbol{G}_n)$ be a sequence of graphs where $G_n$ has $n$ vertices, 
    for which $(\boldsymbol{G}_n)\xrightarrow[locally]{p}(\boldsymbol{G},o)$. Then also $(\boldsymbol{G}_n^{\gamma_{d,\pi}})\xrightarrow[locally]{p}(\boldsymbol{G}^{\gamma_{d,\pi}},o)$.
\end{proposition}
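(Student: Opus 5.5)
We reduce local convergence of the colored sequence to the uncolored one by conditioning on the $r$-balls. By the definition of local convergence in probability, it suffices to test against a convergence-determining class of continuous functions. We will use indicators of the form $h_{H}=\mathbb{1}_{\{B_r(o)\simeq H\}}$, for a fixed finite colored rooted graph $H$ of radius $r$. The extension from these indicators to all bounded continuous $f$ is the standard step in \cite{Hofstad_2016}: approximate $f$ by functions of the $r$-ball, use local finiteness, and control the mass on large balls via tightness. So we have to show that for every fixed colored rooted $H$,
$$\frac{1}{n}\#\{v: B_r^{(\boldsymbol{G}_n^{\gamma_{d,\pi}})}(v)\simeq H\}\xrightarrow{p}\mathbb{P}\bigl(B_r^{(\boldsymbol{G}^{\gamma_{d,\pi}})}(o)\simeq H\bigr).$$

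Let $H^0$ be the uncolored version of $H$. Condition on $\boldsymbol{G}_n$. Each vertex in a ball receives its color independently, with law $\pi_{d}$ where $d$ is its degree in $\boldsymbol{G}_n$. One subtlety: vertices on the boundary sphere of an $r$-ball have their degree determined by the $(r+1)$-ball. We therefore work with $(r+1)$-balls of the uncolored graph, which are still covered by the assumed local convergence. For a vertex $v$ whose uncolored $(r+1)$-ball is isomorphic to a given finite rooted graph $K$, the conditional probability $q(K,H)$ that the colored $r$-ball is isomorphic to $H$ depends only on $K$. It is a finite sum over colorings of the degree-weighted products $\prod\pi_{d_u}(\text{color}_u)$. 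The same formula gives $\mathbb{P}(B_r(o)\simeq H\mid B_{r+1}(o)\simeq K)$ in the limit graph.

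Hence the conditional expectation of the colored count is
$$\frac{1}{n}\sum_K q(K,H)\,\#\{v:B_{r+1}(v)\simeq K\}.$$
Only finitely many $K$ are compatible with $H$, since the degrees of the ball vertices are fixed by $H$ together with the root-distance structure. So by the uncolored local convergence this expectation converges in probability to $\sum_K q(K,H)\,\mathbb{P}(B_{r+1}(o)\simeq K)$, which is the target limit.

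The main remaining step, and the expected obstacle, is concentration of the colored count around its conditional mean. We bound its conditional variance. The indicators $\mathbb{1}_{\{B_r(v)\simeq H\}}$ and $\mathbb{1}_{\{B_r(w)\simeq H\}}$ are conditionally independent unless the balls $B_r(v)$ and $B_r(w)$ intersect, i.e.\ unless $d(v,w)\le 2r$. It is enough to count pairs $(v,w)$ with $d(v,w)\le 2r$ and $v$ in the set of vertices whose $(r+1)$-ball is isomorphic to one of the finitely many $K$. Each such $v$ has at most $|H|$-bounded many vertices $w$ within distance $2r$ that can also contribute. More carefully, we truncate: the fraction of vertices with $|B_{2r}(v)|>M$ is small uniformly in $n$ with high probability, by local convergence and tightness of $|B_{2r}(o)|$. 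This gives conditional variance $O(M/n)+\varepsilon$, and Chebyshev's inequality then yields the required convergence in probability.
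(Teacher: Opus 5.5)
Your proof is correct in substance, but it is organized differently from the paper's.

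\textbf{Comparison with the paper.} The paper works with annealed moments of the colored empirical fraction $F_n$, the fraction of vertices whose colored $k$-ball is isomorphic to a fixed colored rooted graph. Taking two independent uniform vertices $U_n,U_n'$, it shows that the one-vertex probability (which is $\mathbb{E}[F_n]$) converges to the limit value $\mu$. It also shows that the joint probability for $U_n,U_n'$ (essentially $\mathbb{E}[F_n^2]$) converges to $\mu^2$. For this second moment it bounds the probability that the $(k+1)$-balls of $U_n$ and $U_n'$ meet by $K/n$ plus the probability of a ball larger than $K$. It then invokes the asymptotic factorization of the uncolored $(k+1)$-balls around two independent uniform vertices.

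You instead condition on the uncolored graph. Then $\mathbb{E}[F_n\mid \boldsymbol{G}_n]$ is the empirical average of $v\mapsto q(B_{r+1}(v),H)$, which converges by one application of the uncolored hypothesis. The only new randomness, the colors, is controlled by a quenched Chebyshev bound.

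Both arguments rest on the same two facts: the law of the colored $r$-ball is a function of the uncolored $(r+1)$-ball, and colorings of disjoint balls are conditionally independent. Both also need a truncation on ball sizes. Your version uses the hypothesis only in the form in which it is stated, and it makes explicit that concentration is needed only over the coloring. The paper's version fits the standard two-vertex criterion for local convergence in probability, at the cost of needing the two-point factorization for the uncolored graphs.

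\textbf{Two false claims.} Neither is fatal, but both are wrong as written.

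First, infinitely many $K$ are compatible with $H$. A vertex at distance exactly $r$ from the root can have any number of neighbours at distance $r+1$. For $\pi_d=CDF_{\operatorname{Poisson}(x_0)}(d)$ both colors have positive probability for every $d$. This is precisely the boundary issue you raised one paragraph earlier. The convergence of the conditional mean still holds: $(G,v)\mapsto q(B_{r+1}(G,v),H)$ is bounded and depends only on the $(r+1)$-ball, hence is continuous, so local convergence of $(\boldsymbol{G}_n)$ applies to it directly.

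Second, the number of $w$ with $d(v,w)\le 2r$ and $B_r(w)\simeq H^0$ is not bounded in terms of $|H|$. For $r=1$ and $H^0$ a single edge, all $D$ leaves attached to a common vertex of degree $D$ qualify, and their indicators are correlated through that vertex's color. The truncation you describe repairs this if you apply it to $F_n$ itself rather than to the variance bound. Split $F_n=F_n^{\le M}+F_n^{>M}$ according to whether $|B_{2r}(v)|\le M$. Then $F_n^{>M}$ is at most the fraction of vertices with $|B_{2r}(v)|>M$, which is below $\varepsilon$ with high probability. The number of pairs $(v,w)$ with $|B_{2r}(v)|\le M$ and $d(v,w)\le 2r$ is at most $nM$, so $\mathrm{Var}(F_n^{\le M}\mid \boldsymbol{G}_n)\le M/n$, and Chebyshev finishes the argument.
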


We now choose a coloring where nodes with higher degree have higher probability of being red. We could use the basic logistic growth function as a probability function, but since it has a left asymptote, we choose to use the cumulative distribution function of a Poisson distribution instead:  $\pi_d= CDF_{\operatorname{Poisson}(x_0)}(d)$\footnote{To prevent confusion: this choice of a \emph{Poisson distribution} is \textbf{not} related to the fact that the graphs converge to a \emph{Poisson tree}.}. This function starts at 0, so for nodes with degree 0, the probability of being red is 0. We can interpret this choice as the idea that a node starts blue and for every neighbor that a node encounters, there is a fixed, independent probability that the node turns red. 
The Poisson distribution has a parameter $x_0$, which determines the expected value of the number of neighbors at which a node will turn red, and therefore determines the steepness of the curve. See Figure \ref{fig:Poisson_examples} in Appendix \ref{app:aux/proofs} for some examples of CDFs of Poisson distributions with different values of $x_0$.
We consider $x_0$ a parameter that can be tuned, so we do not choose one specific value for it.
A Corollary of Proposition \ref{prop:degree_colored_local} is  that the convergence in Theorem \ref{thm:conv_ER} also holds for $\gamma_{d,\pi}$-colored  Erd\H{o}s--R\'{e}nyi graphs $\boldsymbol{G}^{\gamma_{d,\pi}}_{n, \frac{c}{n}}$:
\begin{corollary}\label{cor:ER-TPoisson_y_d,cdfP(x_0),x}
%
     Let $(\boldsymbol{G}^{\gamma_{d,\pi}}_{n, \frac{c}{n}})$ with $\boldsymbol{G}^{\gamma_{d,\pi}}_{n, \frac{c}{n}} = \langle V_n, E_n, \gamma_{d,\pi, n}\rangle$ be a sequence of colored Erd\H{o}s--R\'{e}nyi graphs and $(T^{\gamma_{d,\pi}'}_{\operatorname{Poisson}(c)},o)$ a  colored rooted tree with a Poisson offspring such that $\gamma_{d,\pi, n}:V_n\rightarrow\{red, blue\}$ and $\gamma_{d,\pi}':V'\rightarrow\{red, blue\}$ ($V'$ is the set of nodes of $T_{\operatorname{Poisson}(c)}$) are colorings such that\\ $\gamma_{d,\pi, n}(v), \gamma_{d,\pi}'(v)\mapsto \begin{cases}
         red\text{ with probability } CDF_{\operatorname{Poisson}(x_0)}(d_v)\\
         blue\text{ with probability } 1 - CDF_{\operatorname{Poisson}(x_0)}(d_v)
     \end{cases}$
     Then, 
     $$(\boldsymbol{G}^{\gamma_{d,\pi}}_{n, \frac{c}{n}})\xrightarrow[locally]{p}(T^{\gamma_{d,\pi}'}_{\operatorname{Poisson}(c)},o).$$
\end{corollary}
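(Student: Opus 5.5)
The plan is to obtain this as a direct instance of Proposition \ref{prop:degree_colored_local}, applied to the uncolored sequence $(\boldsymbol{G}_{n,\frac{c}{n}})$ and with the degree-indexed coloring $\pi_d = CDF_{\operatorname{Poisson}(x_0)}(d)$.

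The first step is to check the hypotheses of that proposition. Theorem \ref{thm:conv_ER} gives $(\boldsymbol{G}_{n,\frac{c}{n}})\xrightarrow[locally]{p}(T_{\operatorname{Poisson}(c)},o)$, and each $\boldsymbol{G}_{n,\frac{c}{n}}$ has exactly $n$ vertices. It remains to confirm that $\gamma_{d,\pi}$ is of the admissible form. It is: a node $v$ is colored red independently, with probability $\pi_{d_v}$, where $\pi_{d_v}\in[0,1]$ is a deterministic function of the degree $d_v$. Since $d_v$ is determined by the $1$-ball $B_1(v)$, the coloring depends only on $1$-neighbourhood information plus independent randomness.

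The second step is to identify the limiting colored object $(\boldsymbol{G}^{\gamma_{d,\pi}},o)$ from Proposition \ref{prop:degree_colored_local} with $(T^{\gamma'_{d,\pi}}_{\operatorname{Poisson}(c)},o)$. In the limit tree, each vertex $v$ is colored red independently with probability $CDF_{\operatorname{Poisson}(x_0)}(d_v)$. Here $d_v$ must be the full degree of $v$ in the tree, which is its number of children plus one for non-root vertices, and not only its number of children. This matches the coloring rule stated in the corollary, so the convergence follows.

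The only point I expect to need care is the notion of degree near the root. Proposition \ref{prop:degree_colored_local} transfers the rule ``color by the degree in the graph''. The limit object is therefore the tree colored by tree-degrees, and that is exactly how $\gamma'_{d,\pi}$ is defined. Once this is noted, nothing beyond the two cited results is required.
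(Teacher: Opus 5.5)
Your proposal is correct and matches the paper, which obtains this corollary directly by applying Proposition~\ref{prop:degree_colored_local} to the Erd\H{o}s--R\'{e}nyi sequence, with local convergence supplied by Theorem~\ref{thm:conv_ER}. Your remark that a non-root vertex of the limit tree has degree equal to its number of children plus one is correct, and it is a point the paper itself does not spell out.
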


Using this Corollary, we will show that the probability of large Erd\H{o}s--R\'{e}nyi graphs colored according to $\gamma_{d,\pi}$ being in \texttt{Mm} illusion goes to 0.
\begin{theorem}\label{thm:conv_ER_coloring_degree}
    Let $c$ and $x_0>c$ be given integers.
    Let $(\boldsymbol{G}^{\gamma_{d,\pi}}_{n, \frac{c}{n}})$ be a sequence of Erd\H{o}s--R\'{e}nyi graphs where a node $v$ with degree $d_v$ is colored red with probability $\mathbb{P}(red)=CDF_{\operatorname{\operatorname{Poisson}}(x_0)}(d_v)$, and blue otherwise. If $n\rightarrow \infty$, the probability of $\boldsymbol{G}^{\gamma_{d,\pi}}_{n, \frac{c}{n}}$ being in \texttt{Mm} illusion goes to 0.
\end{theorem}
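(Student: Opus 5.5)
The plan follows the proof of Theorem~\ref{thm:conv_ER_coloring_p}. By Corollary~\ref{cor:ER-TPoisson_y_d,cdfP(x_0),x}, the colored Erd\H{o}s--R\'enyi graphs converge locally in probability to $(T^{\gamma_{d,\pi}'}_{\operatorname{Poisson}(c)},o)$. The indicators $\mathbb{1}_{\{o\text{ is red}\}}$ and $\mathbb{1}_{\{o\text{ sees a red majority}\}}$ are continuous in the local topology, so two empirical quantities converge in probability: the proportion of red vertices tends to $q:=\mathbb{P}(o\text{ is red})$, and the proportion of vertices seeing a red majority tends to $s:=\mathbb{P}(M^{N_o}=red)$, both computed in the limit tree. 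Write $F:=CDF_{\operatorname{Poisson}(x_0)}$ and let $D\sim\operatorname{Poisson}(c)$ be the root degree. Then $q=\mathbb{E}[F(D)]$.

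\textbf{Neighbors of the root.} The key structural observation is that in the limit tree each neighbor of $o$ has degree $1+D_i$, with $D_i\sim\operatorname{Poisson}(c)$ i.i.d. and independent of $D$, because the root itself counts as a neighbor. Hence, conditional on $D=d$, the colors of the $d$ neighbors are i.i.d.\ red with probability $r:=\mathbb{E}[F(1+D_1)]$. So $s=\sum_{d\ge1}\mathbb{P}(D=d)\,\mathbb{P}(\operatorname{Bin}(d,r)\ge\lceil\frac{d+1}{2}\rceil)$, which is exactly the expression of Lemma~\ref{lem:v_sees_red_majority<0.5} with $p$ replaced by $r$.

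\textbf{Concluding.} The statement then follows as before, in three steps.
\begin{enumerate}
\item Show $q<\frac12$, so that with probability tending to one the global majority is blue.
\item On that event, an \texttt{m} illusion at $v$ means exactly that $v$ sees a red majority. A \texttt{Mm} illusion therefore requires that the proportion of vertices seeing a red majority exceeds $\frac12$.
\item Show $s<\frac12$. Since that proportion tends to $s$ in probability, the probability of a \texttt{Mm} illusion tends to $0$.
\end{enumerate}
For step 3, if $r<\frac12$, Lemma~\ref{lem:v_sees_red_majority<0.5} immediately gives $s<\frac12$.

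\textbf{Main obstacle.} The real work is the pair of inequalities
\[
q=\mathbb{P}(X\le Y)<\tfrac12 \quad\text{and}\quad r=\mathbb{P}(X\le Y+1)<\tfrac12,
\]
where $X\sim\operatorname{Poisson}(x_0)$ and $Y\sim\operatorname{Poisson}(c)$ are independent. For the first, I would use that $X-Y$ is Skellam distributed with integer mean $x_0-c\ge1$, combined with a median bound for Skellam or Poisson laws, to get $\mathbb{P}(X-Y\le0)<\frac12$.

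The second inequality is where I expect genuine difficulty. The $+1$ shift lowers the margin. For small parameters it may even fail: with $c=0$ and $x_0=1$ one gets $r=2/e>\frac12$, although in that case the graph is empty and no illusion is possible.

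If $r<\frac12$ fails, I would not bound $s$ via Lemma~\ref{lem:v_sees_red_majority<0.5}. Instead I would bound $s$ directly, using that degree-$0$ vertices (mass $e^{-c}$) and low-degree vertices contribute little to $s$. Concretely, I would verify $s<\frac12$ in the boundary case $x_0=c+1$ by a direct estimate. I would then extend it to $x_0>c+1$ by monotonicity: $F$ is decreasing in $x_0$, which decreases $r$ and hence $s$.

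I would also check whether the intended hypothesis is $x_0\ge c+2$, or whether the paper's own computation ignores the $+1$ and takes neighbor degrees to be $\operatorname{Poisson}(c)$. In the latter case $r=q$, and the single Skellam bound from step 1 would finish the argument.
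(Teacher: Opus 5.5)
Your reduction is correct, and on one point it is more careful than the paper's own proof. A neighbour of the root of $T_{\operatorname{Poisson}(c)}$ has degree $1+\operatorname{Poisson}(c)$. So, with $X\sim\operatorname{Poisson}(x_0)$ and $Y\sim\operatorname{Poisson}(c)$ independent, the root's neighbours are red independently with probability $r=\mathbb{P}(X\le Y+1)$, while the root is red with probability $q=\mathbb{P}(X\le Y)$. The paper does exactly what you suspect in your last sentence. It gives the root's neighbours $\operatorname{Poisson}(c)$ degrees, so that $r=q$. It then proves $q<\frac12$ by writing $X=Z+W$ with $Z\sim\operatorname{Poisson}(c)$, $W\sim\operatorname{Poisson}(x_0-c)$, using the symmetry of $Z-Y$, and applies Lemma~\ref{lem:v_sees_red_majority<0.5} with $p=q$. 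That shortcut ignores the size-biasing of neighbour degrees, so it is an error, not an option you can fall back on.

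The gap in your plan is the fallback for $x_0=c+1$. No direct estimate can give $s<\frac12$ there in general, because the inequality is false. For $c=5$, $x_0=6$ a direct computation gives $q\approx0.441$, $r\approx0.563$, and $s=\sum_{d\ge1}\mathbb{P}(D=d)\,\mathbb{P}(\operatorname{Bin}(d,r)\ge\lceil\frac{d+1}{2}\rceil)\approx0.519$. So with probability tending to one the global majority is blue while more than half of the vertices see a strict red majority. The graph is then in \texttt{Mm} illusion, and the theorem as stated fails. The degree-$0$ mass and the ties at even degrees do rescue small $c$ (e.g.\ $s\approx0.36$ for $c=1$ and $s\approx0.49$ for $c=3$), but these effects fade as $c$ grows. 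The hypothesis you float, $x_0\ge c+2$, is the right repair, and the paper's decomposition then handles the shifted comparison as well. Take $\lambda=x_0-c\ge2$, $W\sim\operatorname{Poisson}(\lambda)$, $S=Z-Y$ and $a_k=\mathbb{P}(S=k)$. Symmetry of $S$ gives
\[
r=\sum_{w\ge0}\mathbb{P}(W=w)\,\mathbb{P}(S\le 1-w)\le\tfrac12+\tfrac{a_0}{2}\bigl(2(1+\lambda)e^{-\lambda}-1\bigr)+a_1\bigl((2+\lambda+\tfrac{\lambda^2}{2})e^{-\lambda}-1\bigr)<\tfrac12,
\]
since both brackets are negative for $\lambda\ge2$ and $a_0>0$. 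Hence $q\le r<\frac12$, and Lemma~\ref{lem:v_sees_red_majority<0.5} with $p=r$ gives $s<\frac12$. This route also makes your appeal to an external Skellam median bound unnecessary.
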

We saw that under random colorings as well as under a coloring based on the degree of nodes, the probability that an Erd\H{o}s--R\'{e}nyi graph is under \texttt{Mm} illusion goes to 0 as the size of the graph $n\rightarrow \infty$.
Note that we can use our technique for any coloring that is based only on local properties of the nodes, and any random graph with a local limit. Our results here can be seen as elementary examples of the proof strategy.


\section{Conclusion}\label{sec:conclusion}
We investigated the fragility of majority illusions from two main perspectives.
In the first part of the paper, we considered sequences of graphs in which the color of nodes evolve through time according to synchronous or asynchronous majority updates. We saw that both in the synchronous and the asynchronous cases, 
sequences of graph colorings never reach a point from 
which a majority of nodes keep seeing the same wrong majority
.
In the second part of the paper, we used local convergence results to show that on Erd\H{o}s--R\'enyi graphs (colored randomly in different ways) where the number of nodes goes to infinity, the probability that the graph is in \texttt{Mm} illusion vanishes
. 

We conclude by pointing to a few directions of on-going and future research.
A more natural coloring choice in a social network than the 
ones in Section \ref{sec:local_convergence}, would be a coloring based on homophily (see e.g. \cite{McPherson2001
}).
In such colorings, nodes are more likely to have neighbors of 
their own color 
than 
of different colors. In \cite{Ramanan2021}[Proposition 1.16], the authors show that if a converging graph sequence is colored according to a so-called random \emph{Gibbs} probability measure, the sequence  converges 
locally 
to a limit graph that is colored with the same Gibbs measure. Since Gibbs measures can take into account the colors of neighboring nodes, we could define 
one that lets the graph exhibit homophily, and conclude that Erd\H{o}s--R\'enyi graphs colored according to this probability measure converge to a tree with Poisson offspring colored according to the same measure. We leave for future work the details of computing whether or under which specific conditions the resulting graph is (likely) in \texttt{Mm} illusion.  
The same proof strategy can be used to study the probability of \texttt{Mm} illusions on large  Erd\H{o}s--R\'enyi graphs with any coloring that only depends on local properties of the nodes.  

Furthermore, we emphasize that social networks typically have  degree distributions with heavier tails than the Erd\H{o}s--Rényi graph. With more skewed degree sequences, the friendship paradox (the phenomenon that a typical neighbour of a vertex has larger degree than the vertex itself \cite{Hazra2025}) is amplified. This amplification affects the occurrence of \texttt{Mm} illusion for degree-dependent colourings. In particular, our methods can be used to derive conditions under which a uniform graph with a prescribed degree sequence exhibits \texttt{Mm} illusion with probability tending to 1. Such conditions require, in essence, a degree distribution whose variance is large relative to its mean, together with a colouring rule under which high-degree vertices are more likely to be blue. 



\newpage
\bibliographystyle{splncs04}
\bibliography{bibliography}

\newpage

\appendix
\section{Omitted proofs and auxiliary material}\label{app:aux/proofs}
\subsection{Proof of Proposition \ref{prop:synch_conv_to_2-cycle}}
\noindent\textbf{Proposition \ref{prop:synch_conv_to_2-cycle}.}    \emph{Let $(G^\gamma_t)$ be a colored graph sequence of synchronous majority updates, such that at time $t^*$, $G^{\gamma}_{t^*}$ has stabilized on a cycle of length $2$. Then either $G^{\gamma}_{t}$ is in \texttt{Mm} illusion for any $t \geq t^*$, or $G^{\gamma}_t$ is \emph{not} in \texttt{Mm} illusion for any $t\geq t^*$.}
\begin{proof}
    Suppose for a contradiction that a network $G^{\gamma}_t$ under synchronous majority updates stabilizes on a cycle of length 2, such that one of the states is in \texttt{Mm} illusion and one is not. W.l.o.g., call the time step with \texttt{Mm} illusion  $t$ and the time step without \texttt{Mm} illusion $t+1$. Also w.l.o.g., assume that $M^V_{t}=red$. Since  $G^{\gamma}_t$ is in \texttt{Mm} illusion
    , we know by Proposition \ref{prop:maj_ill_changes_global_maj} that in the next state, a majority of the nodes is blue: $M^V_{t+1}=blue$. Since $G^{\gamma}_{t+1}$ is \emph{not} in \texttt{Mm} illusion
    ,  a majority of nodes in 
    $G^\gamma_{t+1}$ either have a strict majority of blue neighbors or are blue themselves and see a tie among their neighbors. Therefore, 
    at the next time step $t+2$, in which the coloring is the same as in $t$, a majority of nodes will be colored blue, which is a contradiction with the assumption that $M^V_{t+2}=M^V_t=red$.
\end{proof}
\subsection{Proof of Proposition \ref{prop:wmi}}
\noindent\textbf{Proposition \ref{prop:wmi}.} \emph{Let $G^\gamma_t$ be a colored graph. 
    If
    , under synchronous majority updates, 
    $M_{t+1}^V = \overline{M_t^V}\not = tie$, then $G_t^{\gamma}$ is  in \texttt{Mw} illusion with at least one node of color $M_t^V$ under \texttt{m} illusion.}
\begin{proof}
    First, observe that if a colored graph $G^{\gamma}_t$ for which $M_t^V\not = tie$ is \emph{not} in \texttt{Mw} illusion, then at least half of the nodes in $G^{\gamma}_t$ have a strict majority of neighbors of $M_t^V$, and therefore will be colored $M_t^V$ in $G^{\gamma}_{t+1}$. Hence, $\overline{M_t^V}$ cannot become a strict majority in $G^{\gamma}_{t+1}$.  Second, if a colored graph $G^{\gamma}_t$ (for which $M_t^V\not = tie$)  is in \texttt{Mw} illusion but all nodes under \texttt{m} illusion have color $\overline{M_t^V}$, then none of the nodes that are $M_t^V$ in $G^{\gamma}_t$ will change their color to be $\overline{M_t^V}$ in $G^{\gamma}_{t+1}$. Therefore, the only possibility where the minority becomes the majority is when $G^{\gamma}_t$ is in \texttt{Mw} illusion and at least one node of color $M_t^V$ is under \texttt{m} illusion. 
    Examples of graphs where this is the case and the global majority changes can be found in Figure \ref{fig:possible_dynamics_results}.
\end{proof}

\subsection{Proof of Lemma \ref{lem:v_sees_red_majority<0.5}}
\noindent\textbf{Lemma \ref{lem:v_sees_red_majority<0.5}.} \emph{$
    \sum_{d=1}^{\infty}\left(\mathbb{P}(\operatorname{Poisson}(c)=d)\cdot\sum_{j=\lceil\frac{d+1}{2}\rceil}^d \mathbb{P}(\operatorname{Bin}(d, p)=j)\right)<\frac{1}{2}
    $ for any $c\in \mathbb{N}$ and $p<\frac{1}{2}$.}
\begin{proof}
   We use two facts.
   \begin{enumerate}
       \item  $\sum_{d=1}^{\infty}\left(\mathbb{P}(\operatorname{Poisson}(c)=d)\right)<1$ for any $c\in \mathbb{N}$, since $\sum_{d=0}^{\infty}\left(\mathbb{P}(\operatorname{Poisson}(c)=d)\right)=1$ and $\mathbb{P}(\operatorname{Poisson}(c)=0)>0$ for any $c\in \mathbb{N}$.
       \item  For any $p<0.5$, for any $d\in \mathbb{N}$, $\sum_{j=\lceil\frac{d+1}{2}\rceil}^d \mathbb{P}(\operatorname{Bin}(d, p)=j)<\frac{1}{2}$. Proof: for $p=0.5$, the distribution of $\operatorname{Bin}(d, p)$ would be symmetrical around $\frac{d}{2}$, so the sum $\sum_{j=\frac{d}{2}}^d \mathbb{P}(\operatorname{Bin}(d, p)=j)$ would be exactly $\frac{1}{2}$ (if $d$ is even so $j$ is integer, but this does not matter for the proof). However, here we take the sum from $j=\lceil\frac{d+1}{2}\rceil$ instead of from $j=\frac{d}{2}$ (and $\lceil\frac{d+1}{2}\rceil>\frac{d}{2}$), and we have $p<0.5$ instead of $p=0.5$, both of which make the total sum smaller. 
   \end{enumerate}
   Now we can write the sum $\sum_{d=1}^{\infty}\left(\mathbb{P}(\operatorname{Poisson}(c)=d)\cdot\sum_{j=\lceil\frac{d+1}{2}\rceil}^d \mathbb{P}(\operatorname{Bin}(d, p)=j)\right)$ as $\sum_{i=1}^\infty(a_ib_i)$ for which we know that $\sum_{i=1}^\infty a_i <1$ (fact 1) and for every $i\in \mathbb{N}$, $b_i<\frac{1}{2}$ (fact 2).  Hence $\sum_{i=1}^\infty(a_ib_i) <\frac{1}{2}$. \\
   Therefore, $$
   \sum_{d=1}^{\infty}\left(\mathbb{P}(\operatorname{Poisson}(c)=d)\cdot\sum_{j=\lceil\frac{d+1}{2}\rceil}^d \mathbb{P}(\operatorname{Bin}(d, p)=j)\right)<\frac{1}{2}.
   $$
\end{proof}
\subsection{Proof of Proposition \ref{prop:degree_colored_local}}
\noindent\textbf{Proposition \ref{prop:degree_colored_local}.} \emph{Let $(\boldsymbol{G}_n)$ be a sequence of graphs where $G_n$ has $n$ vertices, for which $(\boldsymbol{G}_n)\xrightarrow[locally]{p}(\boldsymbol{G},o)$. Then also $(\boldsymbol{G}_n^{\gamma_{d,\pi}})\xrightarrow[locally]{p}(\boldsymbol{G}^{\gamma_{d,\pi}},o)$.}
\begin{proof}

    Let $(G_n)$ be a sequence of graphs where $G_n$ has $n$ vertices and $n\to\infty$ for which it holds that $(G_n)\xrightarrow[locally]{p}(G,0)$. We need to show that also $(G_n^{\gamma_{d,\pi}})\xrightarrow[locally]{p}(G^{\gamma_{d,\pi}},0)$.

Let $(G^\gamma,0)$ be a rooted coloured locally finite graph and let $k\ge 0$. It is sufficient to show that for $U_n$ and $U'_n$ two independent samples from the vertex set of $G_n$, as $n\to \infty$,

\begin{align*}
&\mathbb{P}(B_k(G_n^{\gamma_{d,\pi}},U_n)\simeq (G^\gamma,0)) \to \mathbb{P}(B_k(G^{\gamma_{d,\pi}}, 0) \simeq (G^{\gamma}, 0))\\
\intertext{ and}
&\mathbb{P}\left (B_k(G_n^{\gamma_{d,\pi}},U_n)\simeq  (G^\gamma,0), B_k(G_n^{\gamma_{d,\pi}},U'_n)\simeq  (G^\gamma,0))\right)\\
 &  \qquad \to \mathbb{P}(B_k(G^{\gamma_{d,\pi}}, 0) \simeq  (G^\gamma,0))^2. 
\end{align*}

For the first statement, observe that to sample the coloring ${\gamma_{d,\pi}}$  restricted to $B_k(G_n,U_n)$, we need to know the degrees of all vertices in $B_k(G_n,U_n)$, so we need to observe the $(k+1)st$ neighbourhood of $U_n$ in $G_n$. Thus, we will need to consider all possible $(k+1)$-neighbourhoods $(G',0)$ that realize $k$-neighbourhood $(G,0)$. To be precise, let $\mathcal{G}'$ be the set of locally finite rooted graphs $(G',0)$ for which $B_{k+1}(G',0)\simeq (G,0)$.  Then, 
\begin{equation*}
\begin{aligned}
&\mathbb{P}(B_k(G_n^{\gamma_{d,\pi}},U_n)\simeq (G^\gamma,0))\\
 &=\sum_{(G',0)\in \mathcal{G}'} \mathbb{P}( B_k(G_n^{\gamma_{d,\pi}},U_n)\simeq (G^\gamma,0), B_{k+1}(G_n,U_n)\simeq (G',0))\\
&= \sum_{(G',0)\in \mathcal{G}'} \mathbb{P}(B_{k+1}(G_n,U_n)\simeq (G',0))  \mathbb{P}(B_k((G')^{\gamma_{d,\pi}}, 0)=(G^\gamma,0))\\
&\to \sum_{(G',0)\in \mathcal{G}'} \mathbb{P}(B_{k+1}(G,0)\simeq (G',0)) \mathbb{P}(B_k((G')^{\gamma_{d,\pi}}, 0)=(G^\gamma,0))\\
&=\mathbb{P}(B_k(G^{\gamma_{d,\pi}}, 0) \simeq (G^{\gamma}, 0))
\end{aligned}
\end{equation*}
where the convergence follows from $(G_n)\xrightarrow[locally]{p}(G,0)$. 

For the second statement, we need to consider the correlation of $B_k(G_n^{\gamma_{d,\pi}},U_n)$ and $B_k(G_n^{\gamma_{d,\pi}},U'_n)$. We observe that the colouring of $B_k(G_n,U_n)$ only depends on $B_{k+1}(G_n,U_n)$ and the colouring of $B_k(G_n,U'_n)$ only depends on $B_{k+1}(G_n,U'_n)$, so if $B_{k+1}(G_n,U_n)$ and $B_{k+1}(G_n,U'_n)$ do not overlap, then the colourings are independent conditional on $B_{k+1}(G_n,U_n)$ and $B_{k+1}(G_n,U'_n)$. We will first show that it is very unlikely that $B_{k+1}(G_n,U_n)$ and $B_{k+1}(G_n,U'_n)$ overlap. Observe that this bad event happens if and only if $U'_n\in B_{2k+1}(G_n,U_n)$, so, since $U'_n$ is a uniformly chosen vertex, this event has probability 
\[\frac{|B_{2k+1}(G_n,U_n)| }{|G_n| } .\] 
We show that it is unlikely that this ratio is large.
Fix $\varepsilon>0$. Let $K$ be large enough that $\mathbb{P}(|B_{2k+1}(G,0)|>K)<\varepsilon/4$. Such a $K$ exists because $(G,0)$ is locally finite. By $G_n\xrightarrow[locally]{p}(G,0)$ we know that $\mathbb{P}(|B_{2k+1}(G_n,U_n)|>K) \to\mathbb{P}(|B_{2k+1}(G,0)|>K)$, so for $n$ large enough $\mathbb{P}(|B_{2k+1}(G_n,U_n)|>K)<\varepsilon/2$. Then, because $|G_n|\to \infty$, we may pick $n$ even larger so that $K/|G_n| <\varepsilon/2 $, so that for $n$ large enough

\begin{align*}
    \mathbb{P}&(B_{k+1}(G_n,U_n)\cap B_{k+1}(G_n,U'_n) \neq \emptyset) \\
&\le \frac{K }{|G_n| }+ \mathbb{P}(|B_{2k-1}(G_n,U_n)|>K) \le \varepsilon .
\end{align*} 
Then, by similar reasoning as before, 
\begin{equation*}
    \begin{aligned}
    & \left| \mathbb{P}(B_k(G_n^{\gamma_{d,\pi}},U_n)\simeq  (G^\gamma,0), B_k(G_n^{\gamma_{d,\pi}},U'_n)\simeq  (G^\gamma,0))) \right. \\
    &- \sum_{(G_1,0),(G_2,0)\in \mathcal{G}} \mathbb{P}(B_{k+1}(G_n,U_n)\simeq (G_1,0), B_{k+1}(G_n,U_n)\simeq (G_2,0)) \\
    &\qquad  \left. \times \mathbb{P}(B_k(G_1^{\gamma_{d,\pi}}, 0)=(G^\gamma,0))\mathbb{P}(B_k(G_2^{\gamma_{d,\pi}}, 0)=(G^\gamma,0))\right|\\
    &\le \mathbb{P}\left(B_{k+1}(G_n,U_n)\cap B_{k+1}(G_n,U'_n) \neq \emptyset\right )\le \varepsilon. 
    \end{aligned}
\end{equation*}

But, since $(G_n^{\gamma_{d,\pi}})\xrightarrow[locally]{p}(G^{\gamma_{d,\pi}},0)$, 

\begin{equation*}
    \begin{aligned} 
&\sum_{(G_1,0),(G_2,0)\in \mathcal{G}} \mathbb{P}(B_{k+1}(G_n,U_n)\simeq (G_1,0), B_{k+1}(G_n,U_n)\simeq (G_2,0))\\
&\qquad\qquad \times \mathbb{P}(B_k(G_1^{\gamma_{d,\pi}}, 0)=(G^\gamma,0))\mathbb{P}(B_k(G_2^{\gamma_{d,\pi}}, 0)=(G^\gamma,0))\\
&\to  \sum_{(G_1,0),(G_2,0)\in \mathcal{G}} \mathbb{P}(B_{k+1}(G,0)\simeq (G_1,0)) \mathbb{P}(B_{k+1}(G,0)\simeq (G_2,0)) \\
&\qquad\qquad \times \mathbb{P}(B_k(G_1^{\gamma_{d,\pi}}, 0)=(G^\gamma,0))\mathbb{P}(B_k(G_2^{\gamma_{d,\pi}}, 0)=(G^\gamma,0))\\
&=\mathbb{P}(B_k(G^{\gamma_{d,\pi}}, 0) \simeq  (G^\gamma,0))^2 ,
\end{aligned}
\end{equation*}

which implies the second statement because $\varepsilon$ was arbitrary, so that the proposition follows. 
\end{proof}

\subsection{Proof of Theorem \ref{thm:conv_ER_coloring_degree}}
\noindent\textbf{Theorem \ref{thm:conv_ER_coloring_degree}.} \emph{    Let $c$ and $x_0>c$ be given integers.
    Let $(\boldsymbol{G}^{\gamma_{d,\pi}}_{n, \frac{c}{n}})$ be a sequence of Erd\H{o}s--R\'{e}nyi graphs where a node $v$ with degree $d_v$ is colored red with probability $\mathbb{P}(red)=CDF_{\operatorname{Poisson}(x_0)}(d_v)$, and blue otherwise. If $n\rightarrow \infty$, the probability of $\boldsymbol{G}^{\gamma_{d,\pi}}_{n, \frac{c}{n}}$ being in \texttt{Mm} illusion goes to 0.}
\begin{proof}
To compute the probability that the $n^{th}$ Erd\H{o}s--R\'{e}nyi graph is in \texttt{Mm} illusion, we need to compute the probability that a node in the graph has a majority of red neighbors and the probability that a node itself is red. 
By Corollary \ref{cor:ER-TPoisson_y_d,cdfP(x_0),x}, we have that the probability that a node has a majority of red neighbors in $\boldsymbol{G}^{\gamma_{d,\pi}}_{n, \frac{c}{n}}$ converges to the probability that a node has a majority of red neighbors in the Poisson tree:
\begin{equation*}
\frac{1}{n}\sum_{v\in \boldsymbol{G}^{\gamma_{d,\pi}}_{n, \frac{c}{n}}}\mathbb{1}_{\{v \text{ sees red majority}\}} 
\xrightarrow{p}\mathbb{P}(o \text{ sees red majority in }T^{\gamma'_{d,\pi}}_{\operatorname{Poisson}(c)}) 
\end{equation*}
and the probability that a node is red in $\boldsymbol{G}^{\gamma_{d,\pi}}_{n, \frac{c}{n}}$ converges to the probability that a node is red in the Poisson tree:
\begin{equation*}
    \begin{aligned}
        \frac{1}{n}\sum_{v\in \boldsymbol{G}^{\gamma_{d,\pi}}_{n, \frac{c}{n}}}\mathbb{1}_{\{v \text{ is red}\}}\xrightarrow{p}\mathbb{P}(o \text{ is red in} T^{\gamma'_{d,\pi}}_{\operatorname{Poisson}(c)}). \\
    \end{aligned}
\end{equation*}
To compute $\mathbb{P}(o \text{ sees red majority in }T^{\gamma'_{d,\pi}}_{\operatorname{Poisson}(c)}) $, we need the probabilities that the neighbors of $o$ are red. For a node $v$ with degree $d_v$ we have  that the probability that $v$ is red in $(T^{\gamma_{d,\pi}'}_{\operatorname{Poisson}(c)},o)$ is $\mathbb{P}(v\text{ is red})=CDF_{\operatorname{Poisson}(x_0)}(d_v)$.
Since we do not know the exact degrees of all nodes in $T^{\gamma_{d,\pi}}_{\operatorname{Poisson}(c)}$, but we know that they are distributed according to $\operatorname{Poisson}(c)$, the probability that a node is red if we do not know the degree of the node is $CDF_{\operatorname{Poisson}(x_0)}(\mathbf{P})$, where $\mathbf{P}\sim \operatorname{Poisson}(c)$. 
Therefore, the probability that a majority of neighbors of $o$ are red in $T^{\gamma'_{d,\pi}}_{\operatorname{Poisson}(c)}$ is
\begin{equation}\label{eq:o_sees_red_majority_ydegree}
    \begin{aligned}
        \mathbb{P}&(o\text{ sees red majority})\\
        =&\sum_{d=1}^{\infty}\left(\mathbb{P}(\operatorname{Poisson}(c)=d)\cdot\sum_{j=\lceil\frac{d+1}{2}\rceil}^d \mathbb{P}(j\text{ neighbors are red})\right) \\
        =&\sum_{d=1}^{\infty}\left(\mathbb{P}(\operatorname{Poisson}(c)=d)\cdot  \sum_{j=\lceil\frac{d+1}{2}\rceil}^d \mathbb{P}(\operatorname{Bin}(d, CDF_{\operatorname{Poisson}(x_0)}(\mathbf{P}))=j)\right), 
    \end{aligned}
\end{equation}
where  $\mathbf{P}\sim \operatorname{Poisson}(c)$.

The probability that a node itself is red in the Poisson tree is $$\mathbb{P}(o \text{ is red in } T^{\gamma'_{d,\pi}}_{\operatorname{Poisson}(c)}) = CDF_{\operatorname{Poisson}(x_0)}(\mathbf{P})$$ with $\mathbf{P}\sim \operatorname{Poisson}(c)$.
By the tower principle of conditional expectation, we have that $\mathbb{E}[CDF_{\operatorname{Poisson}(x_0)}(\mathbf{P})]=\mathbb{E}[\mathbb{E}[CDF_{\operatorname{Poisson}(x_0)}(\mathbf{P})|\mathbf{P}]]$. Since $\mathbb{E}[\mathbf{P}]=\mathbb{E}[\operatorname{Poisson}(c)]=c$ and $CDF_{\operatorname{Poisson}(\lambda)}(k) > \frac{1}{2}$ if $k\geq \lambda$ and $CDF_{\operatorname{Poisson}(\lambda)}(k) < \frac{1}{2}$ if $k< \lambda$ \cite{Adell2005}
,
we have that 
\begin{equation*}
    \begin{aligned}
        \mathbb{E}[\mathbb{P}(o\text{ is red})]&=\\
        \mathbb{E}[\mathbb{E}[CDF_{\operatorname{Poisson}(x_0)}(\mathbf{P})|\mathbf{P}]]
        & \begin{cases}
                >\frac{1}{2} & \text{if } c\geq x_0\\
                <\frac{1}{2} & \text{if } c<x_0
            \end{cases}
    \end{aligned}
\end{equation*}

Therefore, the probability that a majority of nodes in the graph is red converges to 1 if $c\geq x_0$ and to 0 if $c<x_0$, and vice versa the probability that a majority of nodes in the graph is blue converges to 0 if $c\geq x_0$ and to 1 if $c<x_0$.

Therefore, given $c<x_0$, the probability that the graph is under \texttt{Mm} illusion is the probability that a majority of nodes sees a red majority, which converges as follows:
\begin{lemma}\label{lem:conv_Mm_ydegree}
     Given any $x_0$ such that $c<x_0$, the probability that $\boldsymbol{G}^{\gamma_{d,\pi}}_{n, \frac{c}{n}}$ is under \texttt{Mm} illusion converges to 1 (if $n\to\infty$) if the probability that o sees a red majority in $T^{\gamma'_{d,\pi}}_{\operatorname{Poisson}(c)}$ (Equation \ref{eq:o_sees_red_majority_ydegree}) is bigger than $\frac{1}{2}$
,
and to 0 if it is smaller than $\frac{1}{2}$.
\end{lemma}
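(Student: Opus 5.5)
The argument is a law-of-large-numbers statement for two empirical proportions, obtained from the local convergence in Corollary \ref{cor:ER-TPoisson_y_d,cdfP(x_0),x}. Write $R_n=\frac{1}{n}\sum_{v}\mathbb{1}_{\{v\text{ is red}\}}$ and $S_n=\frac{1}{n}\sum_{v}\mathbb{1}_{\{v\text{ sees red majority}\}}$. By the displayed convergences just before the lemma, $R_n\xrightarrow{p} r:=\mathbb{P}(o\text{ is red in }T^{\gamma'_{d,\pi}}_{\operatorname{Poisson}(c)})$ and $S_n\xrightarrow{p} s:=\mathbb{P}(o\text{ sees red majority})$, the quantity in Equation \ref{eq:o_sees_red_majority_ydegree}. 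The indicator $\mathbb{1}_{\{o\text{ sees red majority}\}}$ depends on the colors of the neighbors of $o$. These colors depend on the degrees of those neighbors, so they are determined by the $2$-ball. The indicator is therefore still continuous in the colored local topology (take $\delta=1/3$), which justifies the convergence of $S_n$.

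\textbf{Step 1.} I use the computation preceding the lemma that $r<\frac12$ when $c<x_0$. From $R_n\xrightarrow{p} r$ we get $\mathbb{P}(R_n<\frac12)\to 1$. On this event the global majority is $M^V=blue$, with no tie.

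\textbf{Step 2.} On the event $\{M^V=blue\}$, a node is under \texttt{m} illusion exactly when $M^{N_v}=red$. This is because \texttt{m} illusion requires $M^{N_v}\neq tie$ and $M^{N_v}\neq blue$. Hence, on this event, the fraction of nodes under \texttt{m} illusion equals $S_n$ exactly. Therefore
\[
\bigl|\mathbb{P}(\boldsymbol{G}^{\gamma_{d,\pi}}_{n,\frac{c}{n}}\text{ in \texttt{Mm} illusion})-\mathbb{P}(S_n>\tfrac12)\bigr|\le \mathbb{P}(R_n\ge\tfrac12)\to 0.
\]

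\textbf{Step 3.} If $s>\frac12$, choose $\varepsilon=(s-\frac12)/2$. Then $\mathbb{P}(S_n>\frac12)\ge \mathbb{P}(|S_n-s|<\varepsilon)\to1$. If $s<\frac12$, the same argument gives $\mathbb{P}(S_n>\frac12)\to 0$. Combining with Step 2 yields both claims of the lemma.

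The main obstacle is not the probabilistic limit argument but justifying that $s$ and $r$ are the correct limits. The formula in Equation \ref{eq:o_sees_red_majority_ydegree} treats each neighbor's degree as a fresh $\operatorname{Poisson}(c)$ variable. In the tree, however, a neighbor's degree is $1+\operatorname{Poisson}(c)$, counting the edge to the root. I would therefore state the lemma in terms of the true tree probability $s$, which is all the limit argument needs, and avoid relying on that formula's exact form. The edge case $s=\frac12$ is excluded by the statement.
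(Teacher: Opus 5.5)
Your argument is correct and is essentially the paper's own justification, i.e.\ the text just before the lemma: the red fraction tends to $r<\frac12$, so $M^V=blue$ with high probability, and on that event \texttt{Mm} illusion is exactly the event $S_n>\frac12$; your Step~2 bound makes that reduction precise. Your caveat is also right and substantive: a child of $o$ in $T_{\operatorname{Poisson}(c)}$ has degree $1+\operatorname{Poisson}(c)$, so Equation~\ref{eq:o_sees_red_majority_ydegree} understates $s$, and the lemma holds only for the true tree probability as you state it (by my estimate, for $c=10$, $x_0=11$ the true $s$ is about $0.55>\frac12$ while $r<\frac12$, so the paper's later claim that $s<\frac12$ fails there).
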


We will show that, given $c<x_0$ and $p<\frac{1}{2}$,
    $$\mathbb{P}(o \text{ sees red maj. in }T^{\gamma'_{d,\pi}}_{\operatorname{Poisson}(c)}) 
     = \mathbb{P}(o \text{ sees red maj. in }T^{\gamma'_{p}}_{\operatorname{Poisson}(c)}),$$
of which we know by Lemma \ref{lem:v_sees_red_majority<0.5} that it is smaller than $\frac{1}{2}$.
We will do so by first showing that, if $c<x_0$, $CDF_{\operatorname{Poisson}(x_0)}(\mathbf{P})<\frac{1}{2}$.

\begin{proposition}\label{prop:CDF(P)<0.5}
    \begin{equation*}
        CDF_{\operatorname{Poisson}(x_0)}(\mathbf{P})<\frac{1}{2}
    \end{equation*}
    (where  $\mathbf{P}\sim \operatorname{Poisson}(c)$),    whenever $c<x_0$.
\end{proposition}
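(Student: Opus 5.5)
We read the statement as the claim that the \emph{expected} probability of being red is below $\frac{1}{2}$, i.e.\ $\mathbb{E}[CDF_{\operatorname{Poisson}(x_0)}(\mathbf{P})]<\frac{1}{2}$ with $\mathbf{P}\sim\operatorname{Poisson}(c)$. This is the quantity that enters the preceding case distinction; pointwise the inequality fails, since $CDF_{\operatorname{Poisson}(x_0)}(d)>\frac12$ for $d\ge x_0$.

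The plan is to turn this expectation into a single probability about two independent Poisson variables. Let $Q\sim\operatorname{Poisson}(x_0)$ be independent of $\mathbf{P}$. Since $CDF_{\operatorname{Poisson}(x_0)}(d)=\mathbb{P}(Q\le d)$, conditioning on $\mathbf{P}$ gives
$$\mathbb{E}[CDF_{\operatorname{Poisson}(x_0)}(\mathbf{P})]=\mathbb{P}(Q\le \mathbf{P})=\mathbb{P}(Q-\mathbf{P}\le 0).$$
So it suffices to show that the difference $Q-\mathbf{P}$, which has mean $x_0-c\ge 1$ because $c<x_0$ are integers, is non-positive with probability strictly less than $\frac12$.

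The key step is a coupling by the additivity of Poisson variables. Write $m=x_0-c\ge 1$ and $Q=Q'+R$, with $Q'\sim\operatorname{Poisson}(c)$, $R\sim\operatorname{Poisson}(m)$, and $Q',R,\mathbf{P}$ independent. Then $D:=Q'-\mathbf{P}$ is the difference of two i.i.d.\ variables, so it is symmetric about $0$. Set $a_0=\mathbb{P}(D=0)$; note $a_0>0$, since $\mathbb{P}(D=0)\ge \mathbb{P}(Q'=\mathbf{P}=0)=e^{-2c}$. Conditioning on $R=r$:
\begin{itemize}
\item for $r=0$, we get $\mathbb{P}(D\le 0)=\frac{1+a_0}{2}$;
\item for $r\ge 1$, symmetry gives $\mathbb{P}(D\le -r)=\mathbb{P}(D\ge r)\le \mathbb{P}(D\ge 1)=\frac{1-a_0}{2}$.
\end{itemize}
Summing over $r$,
$$\mathbb{P}(Q-\mathbf{P}\le 0)\le e^{-m}\frac{1+a_0}{2}+(1-e^{-m})\frac{1-a_0}{2}=\frac12-a_0\left(\frac12-e^{-m}\right).$$
Because $m\ge1$ we have $e^{-m}\le e^{-1}<\frac12$, and $a_0>0$, so the right-hand side is strictly below $\frac12$.

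The main obstacle is the atom of $Q-\mathbf{P}$ at $0$. The mean alone does not control the probability of being $\le 0$, and the naive symmetric part contributes slightly more than $\frac12$ through the tie mass $a_0$. The coupling handles this: the extra tie mass appears only on the event $\{R=0\}$, of probability $e^{-m}$, while on $\{R\ge1\}$ the tie mass is entirely lost. This is precisely where integrality of $c$ and $x_0$ (giving $m\ge1$ and $e^{-m}<\frac12$) is used. If the pointwise reading of the statement is intended instead, I would state and prove the expectation version above, which is what is needed to conclude that a global blue majority emerges when $c<x_0$.
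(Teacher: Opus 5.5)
Your proof is correct and is essentially the paper's argument: both read the quantity as $\mathbb{E}[CDF_{\operatorname{Poisson}(x_0)}(\mathbf{P})]=\mathbb{P}(X\le Y)$ with independent $X\sim\operatorname{Poisson}(x_0)$ and $Y\sim\operatorname{Poisson}(c)$. Both then split $X$ into independent $\operatorname{Poisson}(c)$ and $\operatorname{Poisson}(x_0-c)$ parts, condition on the second part, and use integrality ($e^{-(x_0-c)}\le e^{-1}<\frac12$) together with the positive tie mass to get the identical bound $\frac12-\mathbb{P}(Z=Y)\bigl(\frac12-e^{-(x_0-c)}\bigr)<\frac12$. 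Your remark that the inequality fails pointwise and must be read in expectation matches how the paper's proof actually treats it.
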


To prove Proposition \ref{prop:CDF(P)<0.5}, we make use of the following lemma:
\begin{lemma}\label{lem:P(X<=Y)}
    For two independent distributions $X$ and $Y$ that both take integer values, $$\sum_{k=0}^\infty\left(\mathbb{P}(X\leq k)\cdot\mathbb{P}(Y=k)\right) = \mathbb{P}(X\leq Y)$$
\end{lemma}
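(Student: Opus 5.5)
The plan is to condition on the value of $Y$ and apply the law of total probability. The events $\{Y=k\}$ for $k\in\mathbb{Z}$ are pairwise disjoint, and their union has probability one because $Y$ is integer-valued. As written, the sum starts at $k=0$, so I will read the lemma as assuming $Y$ takes values in $\mathbb{N}=\{0,1,2,\dots\}$. This holds in the intended application, where $Y\sim\operatorname{Poisson}(c)$. For a general integer-valued $Y$ the same argument works with the sum running over $\mathbb{Z}$.

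The first step is to write
\[
\mathbb{P}(X\leq Y)=\sum_{k=0}^\infty \mathbb{P}(X\leq Y,\, Y=k),
\]
which is countable additivity applied to the partition $\{Y=k\}_{k\geq 0}$. The second step is the set identity $\{X\leq Y\}\cap\{Y=k\}=\{X\leq k\}\cap\{Y=k\}$, which gives $\mathbb{P}(X\leq Y,\, Y=k)=\mathbb{P}(X\leq k,\, Y=k)$. The third step uses independence of $X$ and $Y$ to factor each term as $\mathbb{P}(X\leq k)\,\mathbb{P}(Y=k)$. Summing over $k$ then yields the claimed identity.

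There is no serious obstacle. The only points needing care are, first, the measure-theoretic justification of exchanging the infinite sum with the probability, which is just countable additivity since all terms are nonnegative. Second, one must interpret ``independent distributions'' as independent random variables defined on a common probability space. The integrality of $X$ is not needed; only $Y$ needs to be discrete.

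In the subsequent use for Proposition~\ref{prop:CDF(P)<0.5}, taking $X\sim\operatorname{Poisson}(x_0)$ and $Y=\mathbf{P}\sim\operatorname{Poisson}(c)$ independent identifies $\mathbb{E}[CDF_{\operatorname{Poisson}(x_0)}(\mathbf{P})]$ with $\mathbb{P}(X\leq \mathbf{P})$. That identification is the form in which the lemma will be invoked.
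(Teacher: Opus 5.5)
Your proof is correct and follows the paper's argument: both partition on the events $\{Y=k\}$ and use independence to replace $\mathbb{P}(X\le Y,\,Y=k)$ by $\mathbb{P}(X\le k)\,\mathbb{P}(Y=k)$ (the paper writes the left-hand term as $\mathbb{P}(X\le Y\mid Y=k)\,\mathbb{P}(Y=k)$). Your remark that $Y$ must take values in $\{0,1,2,\dots\}$ for a sum starting at $k=0$ is a genuine, if minor, correction that the paper leaves implicit, and your joint-probability formulation also avoids conditioning on events of probability zero.
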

\begin{proof}
Starting from the righthand side, we can compute the probability that $\mathbb{P}(X\leq Y)$ by summing over all possible values of $Y$:
$        \mathbb{P}(X\leq Y)=\sum_{k=0}^\infty\left(\mathbb{P}(X\leq Y|Y=k)\cdot\mathbb{P}(y=k)\right) $
    Since for each integer $k$,  $\mathbb{P}(X\leq Y|Y=k)=\mathbb{P}(X\leq k)$, we have the desired equality $\sum_{k=0}^\infty\left(\mathbb{P}(X\leq k)\cdot\mathbb{P}(y=k)\right) = \mathbb{P}(X\leq Y).$
\end{proof}
\begin{proof}[Proposition \ref{prop:CDF(P)<0.5}]
We can write $CDF_{\operatorname{Poisson}(x_0)}(\mathbf{P})$ as follows:
\begin{equation*}
    \begin{aligned}
            CDF&_{\operatorname{Poisson}(x_0)}(\mathbf{P}) \\
            &= \sum_{k=0}^\infty\left( CDF_{\operatorname{Poisson}(x_0)}(k)\cdot \mathbb{P}(\operatorname{Poisson}(c)=k) \right)\\
            &=\sum_{k=0}^\infty\left(\mathbb{P}(\operatorname{Poisson}(x_0)\leq k)\cdot\mathbb{P}(\operatorname{Poisson}(c)=k)\right)
    \end{aligned}
\end{equation*}
Let $X$ and $Y$ be independent random variables with $X\sim \operatorname{Poisson}(x_0)$ and $Y\sim \operatorname{Poisson}(c)$. We want to show that $\sum_{k=0}^\infty\left(\mathbb{P}(X\leq k)\cdot\mathbb{P}(Y=k)\right)<\frac{1}{2}$ whenever $c<x_0$. By Lemma \ref{lem:P(X<=Y)}, this is equivalent to showing that $\mathbb{P}(X\leq Y)<\frac{1}{2}$.
We decompose $X$ into two independent distributions $Z$ and $W$, so $X=Z+W$ such that $Z\sim \operatorname{Poisson}(c)$ and $W\sim \operatorname{Poisson}(x_0-(c))$.
We first compare $Z$ and $Y$. Since they are independent and identically distributed, we have that $\mathbb{P}(Z\leq Y) = \frac{1}{2}+\frac{1}{2}\mathbb{P}(Z=Y)$, and $\mathbb{P}(Z<Y) = \frac{1}{2}-\frac{1}{2}\mathbb{P}(Z=Y)$.

Now, we consider the probability of $\{X\leq Y\}$ based on the different possible values of $W$:
\begin{equation}\label{eq:X<=Y}
\begin{aligned}    
    \mathbb{P}(X\leq Y) &= \mathbb{P}(Z+W\leq Y) \\
    &= \sum_{w=0}^\infty\left( \mathbb{P}(W=w)\cdot\mathbb{P}(Z+w\leq Y) \right).    
\end{aligned}
\end{equation}
\begin{itemize}
    \item For $w=0$, $\mathbb{P}(Z+w\leq Y)=\mathbb{P}(Z\leq Y)=\frac{1}{2}+\frac{1}{2}\mathbb{P}(Z=Y)$.
    \item For $w\geq 1$, $Z+w\leq Y$ implies $Z < Y$, so $\mathbb{P}(Z+w\leq Y)\leq \mathbb{P}(Z<Y) = \frac{1}{2}-\frac{1}{2}\mathbb{P}(Z=Y)$
\end{itemize}
Since $W\sim \operatorname{Poisson}(x_0-(c))$, we know the probability that $W$ is 0: $\mathbb{P}(W=0)= e^{-(x_0-(c))}$. Now we can fill in the above probabilities in equation \ref{eq:X<=Y}. For readability we write $\mathbb{P}(Z=Y) = p_{zy}$ and $\mathbb{P}(W=0)=p_0$
\begin{equation*}
    \begin{aligned}
        \mathbb{P}(X&\leq Y) = \mathbb{P}(Z+W\leq Y) \\
        &= \sum_{w=0}^\infty\left( \mathbb{P}(W=w)\cdot\mathbb{P}(Z+w\leq Y) \right)\\
        &\leq p_0\cdot \mathbb{P}(Z\leq Y) + \sum_{w=1}^\infty\left(\mathbb{P}(W=w)\mathbb{P}(Z<Y)\right)\\
        & = p_0\cdot(\frac{1}{2}+\frac{1}{2}p_{zy}) + \sum_{w=1}^\infty\left(\mathbb{P}(W=w)\cdot(\frac{1}{2}-\frac{1}{2}p_{zy})\right)\\
        & = p_0\cdot(\frac{1}{2}+\frac{1}{2}p_{zy}) +(1-p_0)\cdot(\frac{1}{2}-\frac{1}{2}p_{zy}))\\
        & =\frac{1}{2}p_0+\frac{1}{2}p_0p_{zy} + \frac{1}{2} - \frac{1}{2}p_{zy} -\frac{1}{2}p_0 + \frac{1}{2}p_0p_{zy} \\
        & = \frac{1}{2}+\frac{1}{2}p_{zy}(2 p_0-1)
    \end{aligned}
\end{equation*}
We already know that  $p_0=e^{-(x_0-c)}$. Since $c<x_0$ and both $c$ and $x_0$ are integer, we have that $c\leq x_0-1$, so $x_0-c\geq 1$. Because $e^{-1}<\frac{1}{2}$ already, and $e^{-(x_0-c)}$ must be even smaller,  we have that $e^{-(x_0-c)}<\frac{1}{2}$. Hence, $2 p_0-1<0$. Since $p_{zy}=\mathbb{P}(Z=Y)>0$, we have that $\frac{1}{2}p_{zy}(2 p_0-1)<0$. Hence, $\mathbb{P}(X\leq Y)\leq \frac{1}{2}+\frac{1}{2}p_{zy}(2 p_0-1)<\frac{1}{2}$.
\end{proof}

Since Proposition \ref{prop:CDF(P)<0.5} tells us that, whenever $c<x_0$, $CDF_{\operatorname{Poisson}(x_0)}(\mathbf{P})<\frac{1}{2}$, we have that the probability that $o$ sees a red majority in the Poisson tree colored according to  $\gamma_{d,\pi}$ (Equation \ref{eq:o_sees_red_majority_ydegree}) is equal to the probability that $o$ sees a red majority in the Poisson tree where every node is colored red with probability $p<\frac{1}{2}$ (Equation \ref{eq:v_sees_red_majority}) which is less than $\frac{1}{2}$ according to Lemma \ref{lem:v_sees_red_majority<0.5}.

Therefore, we can conclude that $\mathbb{P}(o \text{ sees red maj. in }T^{\gamma_{d,\pi}}_{\operatorname{Poisson}(c)})<\frac{1}{2}$ for any $c$ and $x_0 >c$, which implies  (Lemma \ref{lem:conv_Mm_ydegree}) that the probability of a \texttt{Mm} illusion converges to 0.

\end{proof}

\subsection{Illustration of CDF's of Poisson distributions}

\begin{figure}[h]
    \centering
    \includegraphics[width=\linewidth]{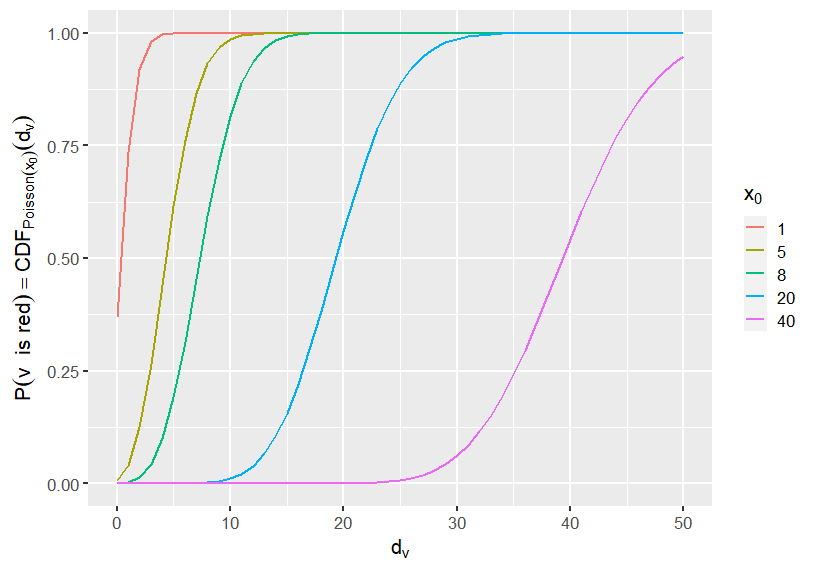}
    \caption{Examples of cumulative distribution functions of $\operatorname{Poisson}(x_0)$ distributions with different values of $x_0$.}
    \label{fig:Poisson_examples}
\end{figure}

\end{document}